\title{The higher spin Laplace operator in several vector variables}
\author{David Eelbode$^{\ast\ast}$, Tim Raeymaekers$^\ast$ and Matthias Roels$^{\ast\ast}$} 
\date{\small{$^\ast$ Clifford Research Group, Ghent University, Galglaan 2, 9000 Gent, Belgium\\
$^{\ast\ast}$ University of Antwerp, Middelheimlaan 2, 2020 Antwerpen, Belgium}}
\def\DynkinNodeSize{4.5pt}
\def\DynkinArrowLength{5pt}
\tikzset{
  dnode/.style={
    circle,
    inner sep=0pt,
    minimum size=\DynkinNodeSize,
    fill=white,
    draw},
  dbnode/.style={
    circle,
    inner sep=0pt,
    minimum size=\DynkinNodeSize,
    fill=black,
    draw},
  middlearrow/.style={
    decoration={markings,
      mark=at position 0.6 with
      {\draw (0:0pt) -- +(+135:\DynkinArrowLength); \draw (0:0pt) -- +(-135:\DynkinArrowLength);},
    },
    postaction={decorate}
  },
  sedge/.style={
  },
  dedge/.style={
    middlearrow,
    double distance=1.4pt,
  },
  tedge/.style={
   middlearrow,
   double distance=2.83pt+\pgflinewidth,
   postaction={draw}, 
   },
  circle dotted/.style={
  	shorten >=2pt,
  	shorten <=4pt,
  	dash pattern=on 0pt off 8pt,
  	line cap=round,
  	line width=1pt
  },
}
\newcommand{\D}{\partial} 
\newcommand{\mE}{\mathbb{E}} 
\newcommand{\mcC}{\mathcal{C}}
\newcommand{\mcD}{\mathcal{D}}
\newcommand{\mcH}{\mathcal{H}}
\newcommand{\mcJ}{\mathcal{J}}
\newcommand{\mcN}{\mathcal{N}}
\newcommand{\mcO}{\mathcal{O}}
\newcommand{\mcP}{\mathcal{P}}
\newcommand{\mcU}{\mathcal{U}}
\newcommand{\mV}{\mathbb{V}}
\newcommand{\g}{\mathfrak{g}}
\newcommand{\h}{\mathfrak{h}}
\newcommand{\gok}{\mathfrak{k}}
\newcommand{\s}{\mathfrak{s}}
\newcommand{\gt}{\mathfrak{t}}
\newcommand{\gl}{\mathfrak{gl}}
\newcommand{\so}{\mathfrak{so}}
\newcommand{\sym}{\mathfrak{sp}}
\newcommand{\spl}{\mathfrak{sl}}
\newcommand{\SO}{\operatorname{SO}}
\newcommand{\Pin}{\operatorname{Pin}}
\newcommand{\Aut}{\operatorname{Aut}}
\newcommand{\Span}{\operatorname{Span}}
\newcommand{\Alg}{\operatorname{Alg}}
\newcommand{\quotient}[2]{\left.\raisebox{.2em}{$#1$}\middle/\raisebox{-.2em}{$#2$}\right.}
\newcommand{\mR}{\mathbb{R}}
\newcommand{\mC}{\mathbb{C}}
\newcommand{\mN}{\mathbb{N}}
\newcommand{\mZ}{\mathbb{Z}}
\newcommand{\inner}[1]{\langle #1\rangle} 
\newcommand{\comm}[1]{\lbrack #1 \rbrack} 
\newcommand{\set}[1]{\left\{ #1 \right\} } 
\newcommand{\norm}[1]{\left| #1 \right|} 
\newcommand{\brac}[1]{\left( #1 \right)} 
\newcommand{\pha}{\phantom{a}}
\newtheorem{definition}{Definition}[section]
\newtheorem{proposition}{Proposition}[section]
\newtheorem{theorem}{Theorem}[section]
\newtheorem{lemma}{Lemma}[section]
\newtheorem{remark}{Remark}[section]
\newtheorem{corollary}{Corollary}[section]
\newtheorem{example}{Example}[section]
\begin{document}
\maketitle

\begin{abstract}
In this paper, an explicit expression is obtained for the conformally invariant higher spin Laplace operator $\mcD_\lambda$, which acts on functions taking values in an arbitrary (finite-dimensional) irreducible representation for the orthogonal group with integer valued highest weight. Once an explicit expression is obtained, a special kind of (polynomial) solutions of this operator is determined.
\end{abstract}

\section{Introduction} 

This paper fits into the current program of constructing conformally invariant operators for higher spin fields, which leads to (massless) equations generalising the 
Maxwell and Dirac equation to arbitrary dimension, and studying their properties and solutions. Over the last decade it has become clear that a better 
understanding of higher spin gauge theories, as pioneered by Vasiliev and Fradkin in \cite{FV1, FV2}, involves the use of higher spin fields of so-called mixed 
symmetry type (i.e. tensor fields described by a Young symmetriser which is neither symmetric or anti-symmetric). In particular we would like to refer to the 
appearance of mixed symmetry fields in the setting of higher spin conformal field theories, see e.g. \cite{CHPT, CPPR, SD}, which has its roots in the conformal 
bootstrap method \cite{FGG}. The approach followed in these papers lies close to the philosophy of the present paper, in which models for arbitrary (finite-
dimensional) irreducible representations involving polynomials satisfying certain systems of equations are used (see also \cite{CSVL, GM}). This is an alternative for 
the well-known abstract index notation for tensor fields, which lends itself nicely to explicit calculations. Using these models, a plethora of results for higher spin 
fields in arbitrary dimension has been obtained within the framework of Clifford analysis, a higher dimensional version of complex analysis which allows one to 
focus on the function theoretical aspects of the theory of differential operators and their solution spaces. Whereas this theory originally focused most of its 
attention on the Dirac operator, see for instance \cite{DSS, GM}, the area of interest has substantially grown once it became clear that also higher spin Dirac 
operators and super versions thereof were elegantly described using similar techniques. We refer for instance to \cite{BSSVL1, BSSVL2, CDB, ER1, ERVdJ} for the 
case of first-order operators. As for the case of second-order operators, for which we refer to e.g. \cite{Br, HM}, we have obtained partial results in \cite{DER1, 
DER2, ER}: in the former paper we focused on the actual construction and function theoretical properties such as fundamental and polynomial solutions for the 
operator acting on symmetric tensor fields, in the latter we developed an algebraic framework for studying for instance reproducing kernels for irreducible 
representation spaces (connected to the projectors calculated in \cite{CHPT}).
\par
The aim of the present paper is to combine these ideas, and apply them to the construction of the most general second-order conformally invariant operator using 
techniques from Clifford analysis (i.e. heavily relying on the polynomial model for higher spin fields). A special case was considered in \cite{DER1} where for 
instance the linearised Einstein equations \cite{Mayor} where written as 
\begin{equation*}
\mcD_2 = \Delta_x-\frac{4}{m+2}\brac{\inner{u,\D_x}-\frac{\norm{u}^2}{m}\inner{\D_u,\D_x}}\inner{\D_u,\D_x}.
\end{equation*}
The subscript two hereby refers to the highest weight $\lambda=(2,0,\ldots,0)$. In this paper, we will start with an arbitrary integer highest weight $\lambda$ and 
the associated operator will be denoted by $\mcD_{\lambda}$. 
The existence of these operators follows from a general classification result (see e.g. \cite{BEastwood, BGG, Lep, Slovak}) or from Branson's paper \cite{Br} in the 
curved setting (see also section 2). Nowadays these operators are usually constructed in terms of the tractor calculus formalism, but this tends to get rather 
complicated due to the presence of all the curvature corrections (see e.g. \cite{Gover}). In this paper we will focus our attention to the flat case. We explain how 
Branson's method, based on generalised gradients \cite{SW}, can be formulated on the level of a transvector algebra, as this allows us to also obtain results on the 
space of solutions. Despite being a partial result only, the solutions obtained in this paper (the so-called solutions of type A) are important for two reasons: first of 
all, our experience from the first-order case has taught us that the `missing' solutions can be obtained from the ones described in the paper using twistor 
operators (see \cite{ERVdJ}). Secondly, the solutions of type A are exactly the solutions satisfying the so-called tranversality conditions which arise in the framework 
of the celebrated ambient method (the mathematical framework underlying both the AdS/CFT correspondence and the tractor calculus). They can thus be seen as 
solutions satisfying a gauge condition, for which we refer to the last section. 
\par
This paper is organised as follows. In Section 2, we use Branson's result to introduce second order conformally invariant operators. We explain how the theories of 
transvector algebras and generalised gradients can be used to construct these second order conformally invariant operators in Section 3 and construct these 
operators explicitly in Section 4. In Section 5, a special type of solutions of such operators is considered, together with the relation of this space of solutions to 
transvector algebras in Section 6.


\section{Branson's result on second-order conformally invariant operators}

In this section, we give a brief overview of the main results of \cite{Br}. Suppose $M$ is an oriented Riemannian manifold of dimension $m>3$ and let $F(M)$ be the frame bundle of $TM$. Since $M$ is a Riemannian 
manifold, it is possible to consider the reduction of $\operatorname{GL}(m)$ to $\SO(m)$ to obtain the orthonormal frame bundle 
$F_o(M)$. Suppose $V_{\lambda}$ is a finite-dimensional irreducible representation of $\SO(m)$ with highest weight
$\lambda=(\lambda_1,\ldots,\lambda_n)$. We may then form the associated vector bundle 
$E_{\lambda}=F_o(M)\times_{\SO(m)}V_{\lambda}$. Every bundle constructed in this way is isomorphic to some tensor bundle 
$E(\lambda)$ obtained by taking tensor products of $TM$ and $T^{\ast}M$. For example, the (co)tangent bundle is associated 
with the standard representation of $\SO(m)$, i.e.  the representation with highest weight $\lambda=(1,0,\cdots,0)$. 
Since $M$ is a Riemannian manifold, we can use the Levi-Civita connection on $E(\lambda)$, which is a map 
$\nabla: \Gamma(E(\lambda))\longrightarrow \Gamma(E(\lambda)\otimes T^{\ast}M)$.
The bundle $E(\lambda)\otimes T^{\ast}M$ is no longer irreducible and can be decomposed into irreducible bundles, i.e. 
\begin{equation*}
E(\lambda)\otimes T^{\ast}M\cong_{\SO(m)}E(\mu_1)\oplus\ldots \oplus E(\mu_{\ell}),
\end{equation*}
where $\mu_j$ is the highest weight of the irreducible bundle $E(\mu_j)$.
A generalised gradient $G_{\lambda,\mu_j}=\operatorname{Proj}_{\mu_j}\circ \nabla$ is then defined as the composition of $\nabla$ 
with the projection on the bundle $E(\mu_j)$. In \cite{Fegan}, Fegan proved that any generalised gradient is a first order conformally invariant 
differential operator and the every conformally invariant first order operator is in fact a generalised gradient. 
We now have gathered the necessary ingredients to formulate the main result \cite{Br}:
\begin{theorem}\label{theorem_Branson}
Given an irreducible vector bundle $E(\lambda)$, there exists a unique (up to a constant multiple) second-order 
conformally invariant operator $D_{\lambda}:E(\lambda)\longrightarrow E(\lambda)$ when $m$ is odd or when $m$ is even and 
$\lambda_n=0$ is zero. When $m$ is odd or when $m$ is even and $\lambda_{n-1}=0$, the operator $D_{\lambda}$
 is explicitly given by
\begin{equation*}
D_{\lambda}:=\sum_{j=1}^{\ell} \brac{\frac{1}{2}+\inner{\lambda+\mu_j+2\rho,\lambda-\mu_j}}^{-1} G_{\lambda, \mu_j}^{\ast}
G_{\lambda, \mu_j}-\frac{R}{2(m-1)},
\end{equation*}
where $R$ is the scalar curvature of $M$, $\inner{\cdot,\cdot}$ is the Killing form of $\so(m)$ and $\rho$ is the sum of the 
fundamental weights of $\so(m)$. If on the other hand $m$ is even and $\lambda_{n-1}\neq0$, then the operator is given by 
$D_{\lambda}:=G_{\lambda, \lambda\pm\varepsilon_{n}}^{\ast}G_{\lambda, \lambda\pm \varepsilon_{n}}$.
\end{theorem}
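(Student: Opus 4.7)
The plan is to follow Branson's original approach: write the candidate operator as a linear combination of the natural second-order blocks $G_{\lambda,\mu_j}^{\ast}G_{\lambda,\mu_j}$, shifted by a scalar curvature correction, and then pin down the coefficients by demanding conformal covariance. The starting point is the ansatz
\begin{equation*}
D_{\lambda}=\sum_{j=1}^{\ell} c_j\,G_{\lambda,\mu_j}^{\ast}G_{\lambda,\mu_j}+\alpha R,
\end{equation*}
which is justified by observing that any natural second-order operator $E(\lambda)\to E(\lambda)$ with scalar leading symbol factors, modulo curvature, as $\nabla^{\ast}\circ A\circ \nabla$ for some bundle endomorphism $A$ of $E(\lambda)\otimes T^{\ast}M$. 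Applying Schur's lemma to the irreducible decomposition $E(\lambda)\otimes T^{\ast}M\cong\bigoplus_{j} E(\mu_j)$ forces $A$ to be diagonal, whence the ansatz; the generalised gradient theorem of Fegan recalled in the text guarantees that each diagonal entry is a multiple of $G_{\lambda,\mu_j}^{\ast}G_{\lambda,\mu_j}$.

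The next step is the conformal variation. Under $\tilde g=e^{2\omega}g$, each building block $G_{\lambda,\mu_j}^{\ast}G_{\lambda,\mu_j}$ picks up an explicit correction which splits into a first-order piece linear in $d\omega$ and a zero-order piece involving the Hessian of $\omega$ and the Riemannian curvature. Using the conformal transformation law of $\nabla$ and rewriting the $\so(m)$-action via the projections onto the summands $E(\mu_j)$, Freudenthal's formula for the quadratic Casimir identifies the coefficient of the $d\omega$-piece with $\tfrac{1}{2}+\inner{\lambda+\mu_j+2\rho,\lambda-\mu_j}$. Demanding that the total $d\omega$-contribution cancel then forces $c_j$ to equal the reciprocal of this factor, while the residual zero-order variation, combined with the standard Weitzenböck identity relating $\nabla^{\ast}\nabla$ to the Bochner Laplacian and the scalar curvature, fixes $\alpha=-1/(2(m-1))$.

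Uniqueness is obtained by noting that any two operators of the required type differ by a conformally invariant natural operator whose principal symbol vanishes; a standard conformal-weight argument then forces this difference to be a scalar multiple of the identity, absorbed into the normalisation. The parity hypothesis guarantees multiplicity-freeness of $E(\lambda)\otimes T^{\ast}M$; when $m$ is even and $\lambda_{n-1}\neq 0$, the decomposition degenerates due to the self-dual/anti-self-dual splitting and only the summand $\mu=\lambda\pm\varepsilon_n$ survives as a consistent target, yielding the second formula. The principal obstacle in this program is the representation-theoretic step identifying the $d\omega$-coefficient with the Casimir expression $\inner{\lambda+\mu_j+2\rho,\lambda-\mu_j}$: this requires diagonalising the $\so(m)$-action on $V_{\lambda}\otimes \mR^{m}$ and manipulating Clebsch--Gordan-type identities, after which the remaining bookkeeping reduces to linear algebra and the Weitzenböck formula.
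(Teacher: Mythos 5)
This theorem is not proved in the paper at all: Section~2 explicitly presents it as a quotation of Branson's result from \cite{Br} (``we give a brief overview of the main results of \cite{Br}''), so there is no in-paper proof to compare against. Your sketch is a reasonable reconstruction of the Fegan--Branson strategy (ansatz in terms of $G_{\lambda,\mu_j}^{\ast}G_{\lambda,\mu_j}$ plus a curvature term, conformal variation, Casimir eigenvalues), and the identification of the coefficient with $\tfrac{1}{2}+\inner{\lambda+\mu_j+2\rho,\lambda-\mu_j}$ is consistent with the standard computation, since $\inner{\lambda+\mu+2\rho,\lambda-\mu}$ is exactly the difference of Casimir eigenvalues $c(\lambda)-c(\mu)$.

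As a proof, however, the proposal has genuine gaps. First, every decisive computation is named rather than carried out: the conformal transformation law of $G_{\lambda,\mu_j}^{\ast}G_{\lambda,\mu_j}$, the diagonalisation of the $\so(m)$-action on $V_\lambda\otimes\mR^m$ producing the stated coefficient, and the Weitzenb\"ock bookkeeping that yields $\alpha=-1/(2(m-1))$ are all deferred --- you acknowledge this yourself in the last sentence, which means the argument is an outline, not a proof. Second, the uniqueness step is logically off: after matching leading symbols, the difference of two candidates is a conformally invariant operator of order $\leq 1$ between density-twisted bundles whose conformal weights differ by $2$; a scalar multiple of the identity is \emph{not} conformally invariant with that weight shift, so the correct conclusion is that the difference vanishes (using that no generalised gradient maps $E(\lambda)$ to itself), not that it is ``absorbed into the normalisation.'' You also need, and do not supply, an argument that the leading symbol of any conformally invariant second-order operator $E(\lambda)\to E(\lambda)$ is forced to be a multiple of $|\xi|^2\,\mathrm{id}$, since $S^2\mR^m\otimes V_\lambda\to V_\lambda$ generally admits several equivariant maps. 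Finally, the explanation of the exceptional case ($m$ even, $\lambda_{n-1}\neq 0$) via a ``self-dual/anti-self-dual splitting'' is asserted rather than derived; the actual mechanism in Branson's argument involves the behaviour of the normalising factors for the summands $\lambda\pm\varepsilon_n$, and your sketch does not engage with why the generic formula fails there while the single term $G_{\lambda,\lambda\pm\varepsilon_n}^{\ast}G_{\lambda,\lambda\pm\varepsilon_n}$ remains invariant.
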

In the next section, we will construct the operators $G_{\lambda, \mu_j}$ and $G_{\lambda, \mu_j}^{\ast}$ on $\mR^m$. We will then use this theorem to derive an explicit form of $D_{\lambda}$ containing some unknown constants (different from the ones in this theorem) that can be fixed by proving conformal invariance, which is similar to what was done in \cite{DER1}.


\section{Transvector algebras and generalised gradients} 

In this section, we will show that the construction of conformally invariant first-order operators, known as twistor operators (and their duals), on $\mR^m$ can be 
performed in an abstract framework involving transvector algebras. These algebras were used in a recent paper \cite{DER2} to obtain an alternative for the classical 
Howe dual pair $\SO(m) \times \sym(4)$ underlying analysis for matrix variables in $\mR^{2m}$, and have led to explicit projection operators on the irreducible 
summands for certain tensor products. Because the construction of first-order conformally invariant operators is based on the Stein-Weiss method
(see \cite{Fegan, SW}), which essentially amounts to projecting on irreducible summands, this transvector algebra turns out to be particularly useful. 
\par 
In order to formally introduce the notion of a transvector algebra, we start from a decomposition 
$\g = \s \oplus \gt$ for a simple Lie algebra $\g$, where $\s \subset \g$ is a subalgebra and where the subspace $\gt$ defines a 
module for the adjoint (commutator) action of $\s$ (i.e. $[\s,\gt] \subset \gt$). 
Defining $\s^+ \subset \s$ as the space containing the positive root vectors of $\s$, one can then introduce the left-sided ideal 
$J' := \mcU'(\g)\s^+$ in $\mcU'(\g)$. The prime hereby stands for the localisation of the universal enveloping algebra $\mcU(\g)$ 
with respect to the subalgebra $\mcU(\h)$, with $\h$ the Cartan algebra in $\g$ (roughly speaking, this allows to write down fractions containing Cartan elements). 
Defining the normaliser $\operatorname{Norm}(J') := \big\{u \in \mcU'(\g) : J'u \subset J'\big\}$, we finally arrive at the following 
(for more details, we refer the reader to \cite{Molev, Zh}): 
\begin{definition}
For a simple Lie algebra $\g = \s \oplus \gt$, one can define the transvector algebra $Z\big(\g,\s\big) := \operatorname{Norm}(J')/J'$, with $J' = \mcU'(\g)\s^+$. 
\end{definition}
\par
This algebra was first introduced by Zhelobenko in \cite{Zh}, in the framework of so-called extremal systems, such as the Dirac and Maxwell equations. In the works of Zhelobenko and Molev \cite{Molev, Zh}, it was shown that the algebra $Z\big(\g,\s\big)$ is generated by the elements $\pi_\s[u]$, with $u \in \gt$ (the complementary $\s$-module) and $\pi_\s \in \mcU'(\gok)$ the extremal projection operator for the Lie algebra $\s$.
\begin{definition}
The extremal projection operator $\pi_\s$ for a semisimple Lie algebra $\s$ is defined as the unique formal operator, contained in 
an extension of $\mcU'(\s)$ to some algebra of formal series, which satisfies the equations 
$\s^+\pi_\s = 0 = \pi_\s\s^-$ and $\pi_\s^2 = \pi_\s$, with $\s^+$ and $\s^-$ the subspaces
containing positive and negative root vectors respectively. 
\end{definition}
\begin{remark}
Note that the localisation with respect to $\mcU(\h)$ is necessary in order to obtain a proper projection operator (i.e. satisfying the requirement $\pi_\s^2 = \pi_\s$). 
\end{remark}
Transvector algebras are examples of so-called quadratic algebras, which satisfy commutation relations of the form
\begin{equation*}
 [X_a,X_b] = \sum_{c,d}\alpha_{cd}X_cX_d + \sum_{p}\beta_pX_p + \gamma,
\end{equation*}
where $X \in Z(\g,\gok)$ is a generator and where the `constants' $\alpha_{cd}, \beta_p$ and $\gamma$ belong to $\mcU'(\h)$. This thus generalises the classical 
relations for Lie algebras. 

\subsection{Two vector variables}

In \cite{DER2}, we have explicitly determined the commutation relations for $Z(\sym(4),\so(4))$ and used them to 
obtain an alternative for the classical Howe dual pair $\SO(m) \times \sym(4)$. A model for this algebra was obtained using a polynomial realisation for
$\sym(4) = \s \oplus \gt$, given by
\begin{equation*}
\sym(4) = \big(\spl_x(2) \oplus \spl_u(2)\big) \oplus \Span_\mC\big(\inner{u,x}, \inner{\D_u,\D_x}, \inner{x,\D_u}, \inner{u,\D_x}\big),
\end{equation*}
where the notation $\inner{\cdot,\cdot}$ hereby refers to the Euclidean inner product on $\mR^m$ and an operator such as for example $\inner{u,\D_x}$ stands for $\sum_{j=1}^mu_j\D_{x_j}$. 
The subspace $\gt$ is a module for $\s = \so(4) \cong \spl_x(2) \oplus \spl_u(2)$, with 
$\spl_x(2) = \Span(|x|^2,\Delta_x,\mE_x + \frac{m}{2})$ and similarly for $u$. Here, $|x|^2$ is the squared norm of the vector $x$, $\Delta_x$ is the corresponding Laplace operator, and $\mE_x=\langle x, \partial_x \rangle$ is the Euler operator. Acting on these $4$ operators in $\gt$ with the extremal projection operator 
$\pi_{\so(4)}$ leads to a quadratic algebra generated by $4$ elements, which we called $(C,A,S_x,S_u)$ in \cite{DER2}. This algebra 
forms the inspiration for the rest of this section, in which we will show that transvector algebras of the form $Z(\g,\s)$, with 
$\g = \sym(2n)$ a symplectic Lie algebra and $\s$ a subalgebra thereof, are intimitely connected to the classical Stein-Weiss 
method for constructing invariant operators.
The starting point is the following identification, valid for a field
\begin{equation*}
 f_k(x,u) = \sum_{\alpha = 1}^{d_k}\varphi_\alpha(x)H_k^\alpha(u)
\end{equation*}
taking values in the space of traceless symmetric tensors of rank $k$ (for which the space of $k$-homogeneous harmonic 
polynomials provides a model, hence the dummy $u \in \mR^m$ and the subscript $k$ attached to our field): 
\begin{equation*}
\nabla f_k \in \Gamma(\mcH_k\otimes T^{\ast}M)\ \stackrel{1:1}{\longleftrightarrow}\ \sum_{j=1}^m\sum_{\alpha = 1}^{d_k}\frac{\D \varphi_\alpha(x)}{\D x_j}v_jH_k^\alpha(u) \in \mcC^\infty(\mR^m,\mcH_k \otimes \mcH_1),
\end{equation*}
where $\alpha \in \{1,\ldots,d_k\}$ is an index labeling a basis for the $d_k$-dimensional space $\mcH_k(\mR^m,\mC)$. 
\begin{remark}
For a general manifold $M$ one has to see $\set{H_k^\alpha(u)}_{\alpha=1}^{d_k}$ as a local basis of the space of sections of an appropriate bundle. In this paper however, we are working with a trivial bundle.
\end{remark}
We have thus introduced a dummy variable $v \in \mR^m$ to denote the representation according to which the cotangent bundle 
decomposes (as we are working with the standard Euclidean metric, the identification with the tangent bundle is immediate). As is 
well-known, this tensor product $\mcH_k\otimes \mcH_1$ decomposes as a module for $\SO(m)$ and the projection 
on each of these summands provides us with a generalised gradient. This tensor product can now be studied within the framework 
of Fischer decompositions for harmonic polynomials in two vector variables, as $\Delta_u H_k(u) = 0 = \Delta_v v_j$, and this is 
precisely the setting in which we have introduced the algebra $Z(\sym(4),\so(4))$.
For instance, for fixed $\alpha$ and $j$ one has that 
\begin{equation}\label{SW_Dk}
v_jH^\alpha_k(u) = H_{k,1}(u,v) + S_v H_{k+1}(u) + C H_{k-1}(u),
\end{equation}
where $Z(\sym(4),\so(4)) = \Alg(S_u,S_v,A,C)$ in the variables $(u,v) \in \mR^{2m}$ and where $H_{k\pm 1}(u)\in\mcH_{k\pm 1}$ 
and $H_{k,1}(u,v)\in \mcH_{k,1}$. Note that the polynomials appearing at the right-hand side should be indexed by two 
labels $(\alpha,j)$, but we have decided to suppress this notation. As a matter of fact, to get an explicit expression 
representing $\nabla f_k(x,u)$ one has to reintroduce the summation over $\alpha$ and $j$, which then leads to
\begin{equation*}
\nabla f_k(x,u) = f_{k,1}(x,u,v) + S_v f_{k+1}(x,u) + C f_{k-1}(x,u),
\end{equation*}
where the subscripts refer to the values of the fields under consideration. We can now invoke the knowledge obtained in 
\cite{DER2} to construct explicit projection operators. First of all, the action of $A = \pi_{\s}[\inner{\D_u,\D_v}]$ on 
the left-hand side gives 
\begin{equation*}
\sum_{j = 1}^m\sum_{\alpha = 1}^{d_k}\frac{\partial \varphi_\alpha(x)}{\partial x_j}A\big(v_jH^\alpha_k(u)\big) = \pi_{u}[\inner{\D_u,\D_x}] f_k(x,u),
\end{equation*}
Note that the action of $\pi_{\s}$ reduces to $\pi_{u}$ (the extremal projection operator for $\spl(2)_u$) as the projection for 
$v \in \mR^m$ is trivial. 
Since the field above takes its values in the space $\mcH_{k-1}(\mR^m,\mC)$ and the decomposition (\ref{SW_Dk}) is multiplicity-
free, the only non-trivial contribution coming from the action of $A$ on the right-hand side is given by
\begin{equation*}
AC f_{k-1}(x,u) = \left(\frac{H_v - H_u}{H_u + 1} - (H_u + H_v)\right)f_{k-1}(x,u),
\end{equation*}
where $-2H_u = 2\mE_u + m$ (and similarly for $v$). We thus get
\begin{equation*}
AC f_{k-1}(x,u) = \frac{(k+m-3)(2k+m-2)}{2k+m-4}f_{k-1}(x,u).
\end{equation*}
This means that the projection on $\Pi_{k-1} : \mcH_k\otimes \mcH_1 \rightarrow \mcH_{k-1}$ is given by
\begin{equation*}
\Pi_{k-1}\big(\nabla f_k(x,u)\big) = \frac{2k+m-4}{(k+m-3)(2k+m-2)}\inner{\D_u,\D_x} f_k(x,u),
\end{equation*}
Note that this is the expression {\em without} the embedding factor, i.e. the invariant map embedding $\mcH_{k-1} \hookrightarrow \mcH_k \otimes T^{\ast}M$ (it suffices to let the operator $C$ act to include it). Next, something similar can be done for the action of the operator $S_u$ on expression (\ref{SW_Dk}). For the left-hand side, we get
\begin{equation*}
\sum_{j = 1}^m\sum_{\alpha = 1}^{d_k}\frac{\partial \varphi_\alpha(x)}{\partial x_j}S_u\big(v_jH^\alpha_k(u)\big) = \pi_{u}
[\inner{u,\D_x}] f_k(x,u), 
\end{equation*}
a field taking its values in $\mcH_{k+1}(\mR^m,\mC)$. 
\begin{remark}
Note that the action of the operator $\pi_{u}[\inner{u,\D_x}]$ on a field $f_k(x,u)$ can either be read as first acting with 
$\inner{u,\D_x}$ on $f_k(x,u)$ and then projecting on the harmonic part in $u$ (the first action gives a polynomial of degree $(k+1)$ 
in $u$, for fixed $x \in \mR^m$), or as the action of the operator
\begin{equation*}
\pi_{u}[\inner{u,\D_x}] = \inner{u,\D_x}-\frac{1}{2\mE_u+m-4}\norm{u}^2\inner{\D_u,\D_x},
\end{equation*}
which has the projection built into its explicit form. Because we always assume that our fields are harmonic in $u \in \mR^m$, these 
actions are equivalent. 
\end{remark}
Again invoking the algebraic relations in $Z(\sym(4),\so(4))$, one also has that
\begin{equation*}
S_uS_v f_{k+1}(x,u) = (k+1)f_{k+1}(x,u)\ ,
\end{equation*}
which eventually leads to 
\begin{equation*}
\Pi_{k+1}\big(\nabla f_k(x,u)\big) = \frac{1}{k+1}\pi_{\spl(2)}[\inner{u,\D_x}] f_k(x,u).
\end{equation*}
Finally, a simple subtraction leads to the last projection: 
\begin{equation*}
\Pi_{k,1}\big(\nabla f(x,u)\big) = \bigg(1 - S_v\Pi_{k+1} - C\Pi_{k-1}\bigg)\inner{v,\D_x} f_k(x,u),
\end{equation*}
Once these generalised gradients are known, one can determine their formal adjoints. To do so, we work in the space 
$\mcC^\infty_c(\mR^m,\mcH_k) = \mcC^\infty_c(\mR^m,\mC) \otimes \mcH_k$, containing smooth compactly supported functions 
taking values in the space 
$\mcH_k$, such that we can make sense of the defining relation $\inner{G_\lambda f,g} = \inner{ f,G_\lambda^{\ast}g}$, where 
$G_\lambda$ denotes a generalised gradient for the highest weight $\lambda$ and $G_\lambda^*$ its dual (note that the compactly 
supported function $f$ lives in the target space for $G_\lambda$, which obviously depends on the highest weight $\lambda$). 
The inner product is hereby defined (e.g. for basis elements) by means of
\begin{equation*}
\inner{\varphi_\alpha(x) \otimes H^\alpha(u),\psi_\beta(x) \otimes H^\beta(u)} := \int_{\mR^m}\varphi_{\alpha}(x)\overline{\psi_{\beta}(x)}\mathrm{d}x \otimes [H^\alpha(u),H^\beta(u)]_F.
\end{equation*}
Here $[\cdot,\cdot]_F$ denotes the Fischer inner product on the space of values (simplicial harmonics in general), which is defined 
as
\begin{equation*}
[H^{\alpha}(u),H^{\beta}(u)]_F:= \overline{H^{\alpha}(\D_u)}H^{\beta}(u)\vert_{x = u = 0},
\end{equation*}
where the operator $\D_{u}$ on the right hand side of the expression denotes that each variable in $H^{\alpha}(u)$ is replaced by its 
corresponding partial derivative. 
The formal adjoint can then be obtained using integration by parts (this is the reason for choosing functions with compact support), 
as this allows us to shift the derivation in $x$. We thus have that $\inner{\D_u,\D_x}^{\ast} = -\pi_{u}[\inner{u,\D_x}]$, where the 
projection operator $\pi_{u}$ appears because one must stay in the space of harmonics in $u \in \mR^m$, 
$\pi_{u}[\inner{u,\D_x}]^{\ast} = -\inner{\D_u,\D_x}$ and finally also
\begin{equation*}
\bigg(\big(1 - S_v\Pi_{k+1} - C\Pi_{k-1}\big)\inner{v,\D_x}\bigg)^{\ast} = - \inner{\D_v,\D_x}.
\end{equation*}
In deriving this last result we made use of the fact that $S_v^{\ast} = S_u$ and $C^{\ast} = A$ (see \cite{DER2}) 
and both operators act trivially on the space of $\mcH_k$-valued fields. We can now put everything together, hereby making use of 
Branson's result (see Theorem \ref{theorem_Branson}) which tells us that there exist constants (determined in terms of Casimir eigenvalues) such that the second-order 
conformally invariant operator can be written as 
\begin{equation}\label{D_k_BR}
\begin{split}
\mcD_k & := c_{k,1}\inner{\D_v,\D_x}\big(1 - S_v\Pi_{k+1} - C\Pi_{k-1}\big)\inner{v,\D_x}  \\ 
& + c_{k+1}\inner{\D_u,\D_x} \pi_{u}[\inner{u,\D_x}] + c_{k-1}\pi_{u}[\inner{u,\D_x}] \inner{\D_u,\D_x}. 
\end{split}
\end{equation}
We will first recast this expression, still containing a few real unknown constants, and we will then fix these constants using the 
uniqueness of the second-order conformally invariant operator (see \cite{DER1}). First of all, we have the 
following (from direct calculations): 
\begin{equation*}
\big[\inner{\D_u,\D_x} , \pi_{u}[\inner{u,\D_x}]\big] = \Delta_x -\frac{2}{2\mE_x + m - 2}\pi_{u}[\inner{u,\D_x}]\inner{\D_u,\D_x}.
\end{equation*}
Note that the Euler operator in the denominator poses no threat here: it acts as a constant on our space of values. This already tells 
us that the last two terms in expression (\ref{D_k_BR}) for $\mcD_k$ can be written as 
\begin{equation*}
 p_k\Delta_x + q_k \pi_{u}[\inner{u,\D_x}] \inner{\D_u,\D_x},
 \end{equation*}
with $p_k$ and $q_k$ two real constants which will depend on $k$ due to the presence of the Euler operator in the lemma. As for 
the first term in $\mcD_k$, it suffices to note that the derivation in $v \in \mR^m$ can only act on the embedding factors $S_v$ 
and $C$, or on the $v$ appearing in the operator $\inner{v,\D_x}$ as the field $f(x,u)$ does not depend on this dummy variable. 
Taking into account that
\begin{align*}
[ \inner{\D_v,\D_x}, \inner{v,\D_x}] & =  \Delta_x\\
[\inner{\D_v,\D_x},S_v] & =  \inner{\D_u,\D_x}\\
 [\inner{\D_v,\D_x},C] & = \pi_{u}\inner{u,\D_x} ,
\end{align*}
there should thus exist an expression of the form 
\begin{equation*}
\mcD_k = \alpha_k\Delta_x + \beta_k\pi_{u}[\inner{u,\D_x}]\inner{\D_u,\D_x},
\end{equation*}
where the real constants $(\alpha_k,\beta_k)$ can now be fixed by requiring that the result is conformally invariant. 
This then fixes the operator $\mcD_k$ up to a multiplicative constant, as was done in \cite{DER1}.

\subsection{$k$ vector variables}

A similar argument can now be used to construct the generalisation of $\mcD_k$ to the case of $k$ vector variables. This leads to a 
second-order conformally invariant operator $\mcD_\lambda$ acting on functions taking values in the $\SO(m)$-module with 
highest weight $\lambda = (\lambda_1,\ldots,\lambda_k,0,\ldots,0)$ where $k < \lfloor \frac{m}{2} \rfloor$. A model for this representation is presented in the following definition:
\begin{definition}
A polynomial $P(u_1,\ldots,u_k)\in \mcP\brac{\mR^{km},\mC}$ in $k$ vector variables, is called simplicial harmonic 
if it satisfies the system
\begin{align*}
\inner{ \D_i, \D_j}P &= 0, \ \text{for all}\  i, j =1, \ldots, k \\
\inner{ u_i, \D_j} P &= 0, \ \text{for all}\ 1 \leq i <j \leq k.
\end{align*} 
\end{definition}
The vector space of simplicial polynomials homogeneous of degree $\lambda_i$ in the vector variable $u_i$ will be denoted by 
$\mcH_{\lambda}\brac{\mR^{km},\mC}$, where $\lambda = (\lambda_1, \ldots, \lambda_k)$. It is known (see e.g. \cite{GW, We}) that the algebra of 
$\SO(m)$-invariant operators acting on $\mcP\brac{\mR^{km},\mC}$ is the universal enveloping algebra of the symplectic Lie algebra 
\begin{align*}
\sym(2k) =& \Span\set{\inner{\D_i, \D_j}, \inner{u_i, u_j} , \inner{ u_i, \D_j} : 1 \leq i,j \leq k, i\neq j}\\
&\oplus\Span\set{\Delta_j,\norm{u_j}^2, \mE_j +\frac{m}{2}: 1\leq j \leq k}.
\end{align*}
We can then define the operator $\mcD_{\lambda}$ as follows:
\begin{definition}
The higher spin Laplace operator in $k$ vector variables is defined as the unique (up to a multiplicative constant) conformally 
invariant second-order differential operator 
$\mcD_{\lambda}:\mcC^{\infty}(\mR^m, \mcH_{\lambda}) \rightarrow \mcC^{\infty}(\mR^m, \mcH_{\lambda})$.
\end{definition}
To obtain an explicit expression for $\mcD_{\lambda}$, we will again justify a particular ansatz, i.e. a linear combination 
of transvector generators in terms of unknown constants, which will be determined by demanding conformal invariance. 
This time we start from the identification 
\begin{equation}\label{nabla_lambda}
\nabla f_\lambda \in \Gamma(\mcH_\lambda\otimes T^{\ast}M)\ \stackrel{1:1}{\longleftrightarrow}\ \sum_{i=1}^m\sum_{\alpha = 1}^{d_\lambda}\frac{\partial \varphi_\alpha(x)}{\partial x_i}v_iH_\lambda^\alpha(u) \in \mcC^\infty(\mR^m,\mcH_\lambda \otimes \mcH_1)
\end{equation}
where $\alpha \in \{1,\ldots,d_\lambda\}$ is an index labeling basis vectors for the $d_\lambda$-dimensional space 
$\mcH_\lambda(\mR^{k m},\mC)$. In the present setting one can use the transvector algebra
\begin{equation*}
Z(\sym(2k + 2),\sym(2k) \oplus \spl(2)),
\end{equation*} 
where the symplectic algebras are generated by either all the invariant operators or the ones expressed in terms 
of the dummy variables in the definition for $\sym(2k)$, and where $\spl(2)$ is generated by $\Delta_v$ and $|v|^2$ 
(in a new dummy variable $v \in \mR^m$). 
The generators for this transvector algebra are projections of all the operators in the bigger symplectic Lie algebra 
$\sym(2k + 2)$ which are not present in the sum $\sym(2k) \oplus \spl(2)$; these are precisely all operators which 
contain both $v$ and a dummy variable $u_j$ with $1 \leq j \leq k$ (either as a variable, or as a differential operator). 
Using the same notations as before, we label these generators as
\begin{equation*}
\big\{S_u^{(j)},S_v^{(j)},A_j,C_j : 1 \leq j \leq k\big\},
\end{equation*}
where, for instance, one has that $C_j = \pi_{\sym(2k)}\pi_{\spl(2)}[\inner{ u_j,v}]$ with $\pi_\g$ the extremal projection operator for the 
Lie algebra $\g$. For $\g = \spl(2)$ and $\g = \sym(2k)$, these operators are commuting, so the order of the projection operators is 
irrelevant. The generalised gradient method then again reduces to investigating expressions of the following form 
(where we use $U = (u_1,\ldots,u_k)$ as a short-hand notation): 
\begin{equation}\label{vj_lambda}
v_iH^\alpha_\lambda(U) = H_{\lambda,1}(U;v) + \sum_{j = 1}^k \brac{S^{(j)}_v H_{\lambda + \varepsilon_j}(U) + C_j 
H_{\lambda-\varepsilon_j}(U)},
\end{equation}
where $H_{\lambda\pm \varepsilon_j}(U)\in \mcH_{\lambda\pm\varepsilon_j}$ and where 
$H_{\lambda,1}(U;v)\in \mcH_{\lambda,1}$. Here $\lambda \pm \varepsilon_j$ stands for the highest weight obtained by adding 
or removing one from the entry $\lambda_j$. Note that if this weight is not dominant, it is not taken into account. Also, the 
right-hand side still depends on $(\alpha,i)$ but we again decided to suppress these notations as one needs to sum over these 
indices after all, which then gives
\begin{equation*}
\nabla f_\lambda(x,U) = f_{\lambda,1}(x,U,v) + \sum_{j = 1}^k \brac{S^{(j)}_v f_{\lambda + \varepsilon_j}(x,U) + 
C_j f_{\lambda-\varepsilon_j}(x,U)}.
\end{equation*}
One can now try to obtain the generalised gradients by working out all the projections. 
First of all, consider the action of a generator 
\begin{equation*}
A_j := \pi_{\sym(2k)}\pi_{\spl(2)}[\inner{\D_j,\D_v}] \in Z(\sym(2k+ 2),\sym(2k) \oplus \spl(2)),
\end{equation*}
where $1 \leq j \leq k$. Acting with this operator on the left-hand side of (\ref{vj_lambda}) gives a polynomial in the dummy variables $U$. Indeed, as the operator $\pi_{\spl(2)}$ reduces to the identity operator here, we get
\begin{equation*}
\pi_{\sym(2k)}\big[\inner{ e_i,\D_{j}} H^\alpha_\lambda(u_1,\ldots,u_k)\big] \in \mcH_{\lambda-\varepsilon_j}(\mR^{km},\mC).
\end{equation*}
The fact that the result belongs to $\mcH_{\lambda-\varepsilon_j}(\mR^{km},\mC)$ is crucial here, and 
follows from the observation that it has the desired degree of homogeneity in the dummy variables, and that 
the action of $\pi_{\sym(2k)}$ projects it on the kernel of the differential operators defining this vector space 
of simplicial harmonics. But this tells us that the action of the operator $A_j = \pi_{\sym(2k)}\pi_{\spl(2)}[\inner{\D_{j},\D_v}]$ 
on the right-hand side of (\ref{vj_lambda}) belongs to this space too. Indeed, there is a unique summand transforming 
as an element of the representation with highest weight $(\lambda - \varepsilon_j)$, so they have to be equal. 
We now claim that there exists a constant $c_j$ such that 
\begin{equation*}
A_jC_j f_{\lambda-\varepsilon_j}(x,U) = c_j f_{\lambda - \varepsilon_j}(x,U).
\end{equation*}
To see this, one can either invoke the quadratic relations between the generators defining the algebra 
$Z(\sym(2k + 2),\sym(2k) \oplus \spl(2))$, which we will not do here (although it would gives us the explicit value of the constant), 
or one can exploit the fact that 
\begin{equation*}
A_jC_j f_{\lambda-\varepsilon_j}(x,U) = \pi_{\sym(2k)}[\inner{\D_{j},\D_x}] f_\lambda(x,U),
\end{equation*}
which tells us that the operator $A_jC_j$ is a rotationally invariant endomorphism on $\mcH_{\lambda - \varepsilon_j}(\mR^{\ell m},\mC)$. Therefore, it has to be a multiple of the identity operator in view of Schur's lemma. This observation also allows us to conclude that 
\begin{equation*}
[A_j,C_a]f_{\lambda+\varepsilon_a}(x,U) = 0 = [A_k,C_b]f_{\lambda-\varepsilon_b}(x,U) 
\end{equation*}
for all $1 \leq a \leq k$ and $1 \leq b \neq j \leq k$, without knowing the explicit commutation relations. 
Reintroducing the summations over $\alpha$ and $i$ appearing in (\ref{nabla_lambda}), this then dictates the existence 
of a constant $\gamma_j$ for which
\begin{equation*}
\Pi_{\lambda - \varepsilon_j}\big(\nabla f_\lambda(x,U)\big) = \gamma_j\pi_{\sym(2k)}[\inner{\D_{j}, \D_x}]f_\lambda(x,U).
\end{equation*}
Note that the constant $\gamma_j$ can be zero, which merely means that the projection on this component in the decomposition 
happens to be trivial. A similar calculation can then be done for the irreducible summands corresponding to highest weights 
of the form $(\lambda + \varepsilon_j)$ in the decomposition (\ref{vj_lambda}), 
for which we consider the action of the generator 
\begin{equation*}
 S_u^{(j)} := \pi_{\sym(2k)}\pi_{\spl(2)}[\inner{ u_j,\D_{v}}] \in Z(\sym(2k + 2),\sym(2k) \oplus \spl(2))
\end{equation*}
with $1 \leq j \leq k$. The action on the left-hand side then gives
\begin{equation*}
\pi_{\sym(2k)}\big[\inner{ e_i,u_j}H^\alpha_\lambda(u_1,\ldots,u_k)\big] \in \mcH_{\lambda+\varepsilon_j}(\mR^{km},\mC).
\end{equation*}
As the summand $\mcH_{\lambda+\varepsilon_j}(\mR^{km},\mC)$ is unique in the decomposition (\ref{vj_lambda}), 
there must again in view of Schur's lemma exist a real constant $\beta_j$ such that the following holds: 
\begin{equation*}
\Pi_{\lambda + \varepsilon_j}\big(\nabla f_\lambda(x,U)\big) = \beta_j\pi_{\sym(2k)}[\inner{ u_j,\D_x }]f_\lambda(x,U). 
\end{equation*}
The final operator is then defined through subtraction: 
\begin{equation*}
\Pi_{\lambda,1}\left(\nabla f_\lambda(x,U)\right)
= \bigg(1 - \sum_{j=1}^k S^{(j)}_v\Pi_{\lambda+\varepsilon_j} - \sum_{j=1}^k C_j\Pi_{\lambda-\varepsilon_j}\bigg)
\inner{ v,\D_x}f_\lambda(x,U).
\end{equation*}
Now that we have an expression for the Stein-Weiss generalised gradients, we determine their formal adjoints on the space 
$\mcC_c(\mR^m,\mcH_\lambda)$. This is again done in terms of the following inner product (e.g. on basis elements)
\begin{equation*}
\inner{\varphi_\alpha(x) \otimes H^\alpha(u),\psi_\beta(x) \otimes H^\beta(u)} := \int_{\mR^m}\varphi_{\alpha(x)}
\overline{\psi_{\beta(x)}}\mathrm{d}x \otimes [H^\alpha(u),H^\beta(u)]_F.
\end{equation*}
An easy argument involving integration by parts shows that 
\begin{equation*}
\big(\pi_{\sym(2k)}\inner{\D_{j},\D_x}\big)^{\ast} =-\pi_{\sym(2k)}\big[\inner{ u_j,\D_x}] 
\end{equation*}
and vice versa (because $u_j$ and $\D_{j}$ are Fischer duals), and that 
\begin{equation*}
\brac{\Big(1 - \sum_{j=1}^k S^{(j)}_v\Pi_{\lambda+\varepsilon_j} - \sum_{j=1}^k C_j\Pi_{\lambda-\varepsilon_j}\Big)
\inner{ v,\D_x}}^{\ast} = -\inner{ \D_v,\D_x}.
\end{equation*}
This means that the invariant operator $\mcD_\lambda$ can (up to a multiplicative constant) be written as an operator of the form
\begin{equation*}
\begin{split}
\mcD_\lambda &= c_{\lambda,1}\inner{ \D_v,\D_x}\big(1 - \sum_{j=1}^k S^{(j)}_v\Pi_{\lambda+\varepsilon_j} - \sum_{j=1}^k C_j\Pi_{\lambda-\varepsilon_j}\big)\inner{ v,\D_x} \\
&+ \sum_{j = 1}^k c_{\lambda + \varepsilon_j}\pi_{\sym(2k)}\inner{ \D_{j},\D_x}\pi_{\sym(2k)}\big[\inner{u_j,\D_x}]\\
&+ \sum_{j = 1}^k c_{\lambda - \varepsilon_j}\pi_{\sym(2k)}\big[\inner{ u_j,\D_x}]\pi_{\sym(2k)}\inner{\D_{j},\D_x}.
\end{split}
\end{equation*}
In case of a non-dominant weight, the corresponding operators are to be omitted from the summation. In the next section, we will 
first simplify this expression to obtain a form similar as the one we found for $\mcD_k$ after which we will compute the unknown 
constants by explicitly proving conformal invariance.


\section{Construction of $\mcD_{\lambda}$}

In order to find an explicit expression for the higher spin Laplace operator $\mcD_\lambda$, we need to introduce the extremal 
projection operator $\pi_{\sym(2k)}$ for the simple Lie algebra $\sym(2k)$. When we choose the positive root vectors of 
this algebra to be $\inner{\D_i, \D_j}$ with $1 \leq i, j \leq k$ and $\inner{ u_a, \D_b}$ with $1 \leq a < b \leq k$, 
the extremal projection operator for the Lie algebra $\sym(2k)$ will project on the intersection of the kernels of all these positive root vectors, and this is then precisely the space of simplicial harmonics. 
The general construction can be found in e.g. \cite{Molev}. An explicit construction in the setting of higher spin Dirac operators 
has been made in \cite{ER1}, but can easily be adapted to this setting. The construction itself can also be found in \cite{Tim}, 
whence we will just mention the operator here. 
\begin{definition}
An ordering of a set of positive roots for a Lie algebra is called normal if any composite root lies between its components.
\end{definition}

Moreover, we have the following Lemma (see e.g. \cite{Molev}).
\begin{lemma}
For each Lie algebra, there exists a normal ordering on the set of positive roots which is not necessarily unique.
\end{lemma}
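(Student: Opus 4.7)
The plan is to exhibit one such ordering concretely via a reduced decomposition of the longest element $w_0$ of the Weyl group $W$ of $\g$. Fix a Cartan subalgebra $\h \subset \g$, a set of simple roots $\Pi=\{\alpha_1,\ldots,\alpha_r\}$, and let $\Phi^+$ denote the positive roots, with $N=|\Phi^+|$. Writing $w_0=s_{j_1}s_{j_2}\cdots s_{j_N}$ as any reduced product of simple reflections, the candidate ordering is
\begin{equation*}
\beta_k \ := \ s_{j_1}s_{j_2}\cdots s_{j_{k-1}}(\alpha_{j_k}), \qquad 1 \le k \le N.
\end{equation*}
The claim to be established is that $(\beta_1,\beta_2,\ldots,\beta_N)$ is a normal ordering of $\Phi^+$.

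First I would verify that this list enumerates $\Phi^+$: this is the classical statement that the inversion set $\mcI(w):=\{\alpha\in\Phi^+ : w^{-1}\alpha \in -\Phi^+\}$ of any $w\in W$ is described by the initial segment $\{\beta_1,\ldots,\beta_{\ell(w)}\}$ of any reduced expression of $w$, together with $\mcI(w_0)=\Phi^+$. An easy induction using that a simple reflection $s_j$ permutes $\Phi^+\setminus\{\alpha_j\}$ and negates $\alpha_j$ suffices. The main content is then the convexity property: if $\beta_c=\beta_a+\beta_b \in \Phi^+$ with $a<b$, one must show $a<c<b$. I would deduce this from the biconvexity of every initial segment $\mcI_\ell:=\{\beta_1,\ldots,\beta_\ell\}$, namely that both $\mcI_\ell$ and $\Phi^+\setminus\mcI_\ell$ are closed under taking positive-root sums that remain in $\Phi^+$. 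If $c\le a$, then $\beta_a,\beta_c\in\mcI_a$ would by biconvexity force $\beta_b=\beta_c-\beta_a\in\mcI_a$, contradicting $a<b$; the case $c\ge b$ is symmetric.

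The assertion that the ordering is not necessarily unique is then automatic: by Matsumoto's theorem any two reduced decompositions of $w_0$ are connected by a sequence of braid and commutation relations, and these typically produce genuinely different orderings as soon as $\g$ has rank at least two. The main obstacle in this approach is precisely the biconvexity of inversion sets, which requires a careful combinatorial argument on roots rather than any purely Lie-theoretic shortcut; once it is in hand, the rest is bookkeeping. An alternative path avoiding the Weyl group is to pick a linear functional $\xi\in\h^*$ that is strictly positive on the simple roots and to lexicographically order $\Phi^+$ by the resulting heights, but the essential convexity check is the same, so either strategy lands at the same place.
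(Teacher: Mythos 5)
The paper does not actually prove this lemma: it is quoted from Molev's book and then merely illustrated by writing down one explicit normal ordering for the positive system of $\sym(2k)$ used in the text. Your reduced-word construction $\beta_k = s_{j_1}\cdots s_{j_{k-1}}(\alpha_{j_k})$ is the standard argument sitting behind that citation (normal orderings correspond bijectively to reduced expressions of $w_0$), so as a strategy it is correct, more self-contained than the paper, and the non-uniqueness claim via Matsumoto's theorem is fine. Note that the lemma is insensitive to the paper's somewhat nonstandard choice of positive system for $\sym(2k)$, since your argument works for any choice of simple roots.

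Two points need repair, though. First, the convexity step as written is a non sequitur: biconvexity of $\mcI_a$ means that $\mcI_a$ and $\Phi^+\setminus\mcI_a$ are each closed under sums landing in $\Phi^+$, and from $\beta_a,\beta_c\in\mcI_a$ you cannot conclude $\beta_c-\beta_a\in\mcI_a$ (closure under sums does not give closure under differences; concretely, $w^{-1}\beta_a<0$ and $w^{-1}\beta_c<0$ say nothing about the sign of $w^{-1}(\beta_c-\beta_a)$). The correct deduction picks different initial segments: if $c<a<b$ then $\beta_a,\beta_b\in\Phi^+\setminus\mcI_{a-1}$, which is closed under sums, so $\beta_c=\beta_a+\beta_b\notin\mcI_{a-1}$, i.e.\ $c\geq a$, a contradiction; and if $c\geq b$ then $\beta_a,\beta_b\in\mcI_b$ forces $\beta_c\in\mcI_b$, i.e.\ $c\leq b$, with $c=b$ and $c=a$ excluded since $\beta_c$ differs from both summands. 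Second, your proposed ``alternative path'' is wrong: ordering $\Phi^+$ by the values of a linear functional $\xi$ that is positive on simple roots always places $\gamma=\alpha+\beta$ \emph{after} both $\alpha$ and $\beta$ (since $\xi(\gamma)=\xi(\alpha)+\xi(\beta)>\max(\xi(\alpha),\xi(\beta))$), never between them, so it does not produce a normal ordering in the sense of the paper's definition. Delete that remark or replace it by the observation that the height-ordering yields a \emph{convex} ordering in the other common convention, which is not the one used here.
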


In case of the symplectic Lie algebra $\sym(2k)$, we for example have the following normal ordering:
\begin{align*}
\varepsilon_1-\varepsilon_2,\varepsilon_1-&\varepsilon_3,\ldots,\varepsilon_1-\varepsilon_k, \varepsilon_2-\varepsilon_3,\varepsilon_2-\varepsilon_4,\ldots,\varepsilon_{k-1}-\varepsilon_k, \\
-2&\varepsilon_k,\ldots,-2\varepsilon_3,-\varepsilon_2-\varepsilon_3,-\varepsilon_1-\varepsilon_3,-2\varepsilon_2,
-\varepsilon_1-\varepsilon_2,-2\varepsilon_1.
\end{align*}
Next, we define the operators 
\begin{align*}
\pi_{-2\varepsilon_a} &=\sum_{s=0}^{+\infty} \frac{1}{4^s s!} 
\frac{\Gamma(-\mE_a-\frac{m}{2}+a+1)}{\Gamma(-\mE_a-\frac{m}{2}+a+1+s)}|u_a|^{2s} \Delta_a^s \qquad 1 \leq a \leq k \\
\pi_{-\varepsilon_a-\varepsilon_b} &= \sum_{s=0}^{+\infty} \frac{1}{s!}
\frac{\Gamma(-\mE_a-\mE_b-m+a+b+1)}{\Gamma(-\mE_a-\mE_b-m+a+b+s+1)}\inner{u_a, u_b}^s \inner{\D_a, \D_b}^s 
\qquad 1 \leq a,b \leq k \\
\pi_{\varepsilon_i-\varepsilon_j}&=\sum_{s=0}^{+\infty} \frac{(-1)^s}{s!}\frac{\Gamma(\mE_i-\mE_j+j-i+1)}{\Gamma(\mE_i-\mE_j+j-i+1+s)}\inner{u_j, \D_i}^s \inner{u_i, \D_j}^s \qquad 1 \leq i< j \leq k.
\end{align*}
Taking the product of all these operators in any normal ordering gives us the extremal projector $\pi_{\sym(2k)}$. 
\begin{proposition}\label{prop_extremal_sym(2k)}
The extremal projection operator $\pi_{\sym(2k)}$ for $\sym(2k)$ is given by the following product of operators: 
\begin{align*}
\pi_{\sym(2k)}=&\pi_{\varepsilon_1-\varepsilon_2}\ldots\pi_{\varepsilon_1-\varepsilon_k}\pi_{\varepsilon_2-\varepsilon_3}\ldots
\pi_{\varepsilon_{k-1}-\varepsilon_k}\pi_{-2\varepsilon_k}\ldots\pi_{-2\varepsilon_3}\\
&\times\pi_{-\varepsilon_2-\varepsilon_3}\pi_{-\varepsilon_1-\varepsilon_3}\pi_{-2\varepsilon_2}\pi_{-\varepsilon_1-\varepsilon_2}
\pi_{-2\varepsilon_1}.
\end{align*}
\end{proposition}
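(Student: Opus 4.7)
The plan is to invoke the general construction of the extremal projection operator due to Asherova--Smirnov--Tolstoy (as presented in \cite{Molev}), which expresses $\pi_\s$ for a semisimple Lie algebra $\s$ as an ordered product
\begin{equation*}
\pi_\s = \prod_{\alpha \in \Phi^+}^{\rightarrow} \pi_\alpha,
\end{equation*}
where the product is taken in any normal ordering of the positive roots, and where each factor $\pi_\alpha$ is determined by the $\spl(2)$-triple $(e_\alpha, f_\alpha, h_\alpha)$ associated to $\alpha$ through the universal formula
\begin{equation*}
\pi_\alpha = \sum_{s=0}^\infty \frac{(-1)^s}{s!}\,\frac{\Gamma(h_\alpha + \rho(\alpha)+1)}{\Gamma(h_\alpha+\rho(\alpha)+1+s)}\, f_\alpha^s e_\alpha^s,
\end{equation*}
up to a rescaling that accounts for the commutator $[e_\alpha,f_\alpha]$ not being in standard $\spl(2)$-normalization. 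The task is therefore to verify that the three families of operators displayed in the statement are precisely the $\pi_\alpha$ attached to the positive roots of $\sym(2k)$ in the realisation at hand, and that the exhibited sequence is normal.

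First I would pin down the positive root system. In the convention adopted in the previous section, the positive root vectors of $\sym(2k)$ are $\inner{u_a,\D_b}$ for $a<b$, $\inner{\D_a,\D_b}$ for $a<b$, and $\Delta_a$; these have weights $\varepsilon_a-\varepsilon_b$, $-\varepsilon_a-\varepsilon_b$ and $-2\varepsilon_a$ respectively (with Cartan given by the Euler operators $\mE_j+m/2$ representing $\varepsilon_j$). A direct calculation of half the sum of positive roots in this ``reversed'' convention yields $\rho = -\sum_{a=1}^k a\,\varepsilon_a$, so that $\rho(\varepsilon_i-\varepsilon_j) = j-i$, $\rho(-2\varepsilon_a) = 2a$, and $\rho(-\varepsilon_a-\varepsilon_b) = a+b$.

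Next I would identify the $\spl(2)$-triple for each positive root: for $\alpha = \varepsilon_i-\varepsilon_j$, take $e_\alpha = \inner{u_i,\D_j}$, $f_\alpha = \inner{u_j,\D_i}$, so $h_\alpha = \mE_i-\mE_j$; for $\alpha = -2\varepsilon_a$, take $e_\alpha = \Delta_a$, $f_\alpha = |u_a|^2$ (yielding $[e_\alpha,f_\alpha] = 4\mE_a+2m$, whence the factor $4^{-s}$); for $\alpha = -\varepsilon_a-\varepsilon_b$, take $e_\alpha = \inner{\D_a,\D_b}$, $f_\alpha = \inner{u_a,u_b}$. Substituting these data into the universal $\pi_\alpha$-formula and converting the finite products of linear factors into $\Gamma$-quotients reproduces exactly the three operators stated. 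The normality of the displayed ordering is then a finite combinatorial check: one verifies case by case that whenever a positive root is a sum $\alpha+\beta$ of two other positive roots, it is listed in between them (the pattern ``$\varepsilon_i-\varepsilon_j$ first, then $-2\varepsilon_a$, interleaved with $-\varepsilon_a-\varepsilon_b$'' is dictated precisely by the decompositions $-\varepsilon_a-\varepsilon_b = (\varepsilon_c-\varepsilon_b) + (-\varepsilon_a-\varepsilon_c)$ and $-2\varepsilon_a = (\varepsilon_a-\varepsilon_b)+(-\varepsilon_a-\varepsilon_b)$).

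The main obstacle I anticipate is the bookkeeping in matching the shifts: the positive system chosen here is not the usual one (the ``long'' roots $\pm 2\varepsilon_a$ are taken with a minus sign to make $\Delta_a$ a raising operator), so the Weyl vector and hence the $\rho(\alpha)$ appearing in each denominator have to be computed from scratch, and the non-standard commutators $[e_\alpha,f_\alpha]$ for the long and short imaginary roots introduce the $4^{-s}$ factor which must be tracked through the $\Gamma$-function reformulation. Once those normalisations are pinned down, multiplying the three families of factors in the displayed normal order gives $\pi_{\sym(2k)}$ directly.
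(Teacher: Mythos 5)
Your proposal is correct and follows essentially the same route as the paper, which offers no written proof at all but simply defers to the general Asherova--Smirnov--Tolstoy/Molev theorem (extremal projector as an ordered product of $\pi_\alpha$ over a normal ordering) together with the explicit verifications in the cited references; your outline is precisely the verification those references supply. One small point to watch when you ``track the normalisations'': the shift in the denominator is $\langle \rho,\alpha^\vee\rangle$, not $\langle\rho,\alpha\rangle$, so for the long roots $-2\varepsilon_a$ the correct value is $a$ (matching the $+a+1$ in the stated $\pi_{-2\varepsilon_a}$) rather than the $2a$ you quote.
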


This operator has the properties that
\begin{align}
\inner{ u_i, \D_j} \pi_{\sym(2k)} &= 0 \qquad \forall 1 \leq i < j \leq k \label{rel1}\\
\inner{ \D_i, \D_j}\pi_{\sym(2k)} &= 0 \qquad \forall 1 \leq i, j \leq k  \label{rel2}\\
\pi_{\sym(2k)} \inner{u_j, \D_i}&= 0 \qquad \forall 1 \leq i < j \leq k \label{rel3} \\
\pi_{\sym(2k)} \inner{u_i, u_j} &= 0 \qquad \forall 1 \leq i, j \leq k. \label{rel4}
\end{align}
\begin{remark}
When acting on polynomials, the extremal projector simplifies a lot, since the finiteness of the degrees in each vector variable causes the terms of each infinite series sum to act trivially above a certain index. 
\end{remark}

Similar to the construction of twistor operators in \cite{ER1}, it can be shown that the generalised gradients 
\begin{align*}
G_{\lambda,\lambda+\varepsilon_j}:\mcC^\infty(\mR^m, \mcH_\lambda)&\longrightarrow \mcC^\infty(\mR^m, \mcH_{\lambda+\epsilon_j})\\
G_{\lambda,\lambda+\varepsilon_j}^{\ast}:\mcC^\infty(\mR^m, \mcH_{\lambda+\epsilon_j})&\longrightarrow \mcC^\infty(\mR^m, \mcH_{\lambda}).
\end{align*}
are equal to 
\begin{equation*}
G_{\lambda,\lambda+\varepsilon_j}:=\pi_{\sym(2k)}\inner{u_j,\D_x}\quad \text{and}\quad 
G_{\lambda,\lambda+\varepsilon_j}^{\ast}:=\pi_{\sym(2k)}\inner{\D_j,\D_x}.
\end{equation*}
Note that the expressions for these operators are independent of $\lambda$ as the precise form for a particular choice of $\lambda$ is absorbed into the action of $\pi_{\sym(2k)}$, which contains Euler operators generating the appropriate constants. This means that we may use the shorthand notations
$G_{\varepsilon_j}:=G_{\lambda,\lambda+\varepsilon_j}$ and similarly for its formal adjoint. The operator $G_{\varepsilon_j}^{\ast}$
can be simplified to
\begin{equation*}
G_{\varepsilon_j}^{\ast}= \prod_{p=j+1}^k \left( 1-\frac{\inner{ u_p, \D_j} \inner{ u_j, \D_p}}{\mE_j-\mE_p+p-j+1} \right) \inner{\D_j,\D_x},
\end{equation*}
or, when using $\sym(2k+2)$ commutation relations, the PBW theorem and the fact that this operator acts on 
$\mcH_\lambda$-valued functions,
\begin{equation*}
G_{\varepsilon_j}^{\ast} = \inner{\D_j,\D_x} + \sum_{j < i_1 < \cdots < i_s \leq k} \frac{\inner{ u_{i_1}, \D_j} 
\inner{ u_{i_2}, \D_{i_1}}\cdots \inner{u_{i_s}, \D_{i_{s-1}}}\inner{ \D_{i_s}, \D_x}}{(\mE_j-\mE_{i_1}+i_1-j) \cdots 
(\mE_j-\mE_{i_s}+i_s-j)}.
\end{equation*} 
\begin{proposition}
The composition of the twistor operator composed with its formal adjoint can be simplified to
\begin{equation*}
G_{\varepsilon_a}^{\pha}G_{\varepsilon_a}^{\ast}=  \pi_{\sym(2k)}\brac{\sum_{j=a}^k a_j \inner{ u_j, \D_x}\inner{ \D_j, \D_x}},
\end{equation*}
for some coefficients $a_j\in \mcU'(\h)$. 
\end{proposition}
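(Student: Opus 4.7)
Since on $\mcH_\lambda$-valued fields we have $G_{\varepsilon_a}G_{\varepsilon_a}^{\ast}=\pi_{\sym(2k)}\inner{u_a,\D_x}G_{\varepsilon_a}^{\ast}$, and since the text already records the multi-index simplification
\begin{equation*}
G_{\varepsilon_a}^{\ast} = \inner{\D_a,\D_x} + \sum_{s\geq 1}\ \sum_{a<i_1<\cdots<i_s\leq k}\frac{\inner{u_{i_1},\D_a}\inner{u_{i_2},\D_{i_1}}\cdots\inner{u_{i_s},\D_{i_{s-1}}}\inner{\D_{i_s},\D_x}}{\prod_{r=1}^s(\mE_a-\mE_{i_r}+i_r-a)},
\end{equation*}
the whole task reduces to pushing $\inner{u_a,\D_x}$ rightward through each summand $T_s$ and simplifying using the projector relations (\ref{rel1})--(\ref{rel4}). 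The single non-trivial commutator needed, which is verified by a short computation, is $[\inner{u_p,\D_x},\inner{u_q,\D_p}] = -\inner{u_q,\D_x}$ for $p\neq q$; otherwise $\inner{u_p,\D_x}$ commutes with every $\inner{u_i,\D_j}$ and $\inner{\D_j,\D_x}$ whose indices do not involve $p$.

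Applied to $T_s$, moving $\inner{u_a,\D_x}$ past the leftmost factor $\inner{u_{i_1},\D_a}$ splits the result into a \emph{through} contribution (in which $\inner{u_a,\D_x}$ has been transported all the way to the right) plus an \emph{annihilation} contribution (in which $\inner{u_{i_1},\D_a}$ has been replaced by $-\inner{u_{i_1},\D_x}$). The through contribution is killed by $\pi_{\sym(2k)}$: once the Cartan-rational prefactor is pushed to the right, which merely shifts its Euler arguments, the leftmost surviving operator is the negative root $\inner{u_{i_1},\D_a}$, and relation (\ref{rel3}) gives $\pi_{\sym(2k)}\inner{u_{i_1},\D_a}=0$. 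For the annihilation contribution with $s\geq 2$, the next factor $\inner{u_{i_2},\D_{i_1}}$ now sits directly to the right of $\inner{u_{i_1},\D_x}$, and the same commutator identity applies: it splits into a piece with $\inner{u_{i_2},\D_{i_1}}$ (again a negative root) brought adjacent to $\pi_{\sym(2k)}$ and killed, plus a new annihilation term $-\inner{u_{i_2},\D_x}$. Iterating down the chain, each $T_s$ collapses, up to sign and Cartan shifts, to a single surviving summand of the form $\pi_{\sym(2k)}\,c_s\,\inner{u_{i_s},\D_x}\inner{\D_{i_s},\D_x}$ with $c_s\in\mcU'(\h)$.

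Finally, summing the surviving contributions grouped according to the last index $i_s=j$, and adding the $j=a$ term coming straight from the $s=0$ summand of $G_{\varepsilon_a}^{\ast}$, produces the claimed expression with $a_j\in\mcU'(\h)$. The main obstacle is the bookkeeping of the Cartan shifts that arise each time a denominator $1/D_s$ is commuted past a root operator; this is routine but somewhat intricate, and is of the same flavour as the derivation of the multi-index form of $G_{\varepsilon_a}^{\ast}$ itself.
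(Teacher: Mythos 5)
Your argument is correct and is essentially the paper's own proof, just written out in full: the paper's one-line justification invokes exactly the commutator $[\langle u_j,\partial_x\rangle,\langle u_i,\partial_j\rangle]=-\langle u_i,\partial_x\rangle$ together with the annihilation relations of $\pi_{\mathfrak{sp}(2k)}$ on negative root vectors, which is precisely the mechanism you make explicit by pushing $\langle u_a,\partial_x\rangle$ through the multi-index expansion of $G_{\varepsilon_a}^{\ast}$.
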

Although the right hand-side of this expression is less canonical, it is easier to use in practical computations. 
\begin{proof}
Follows from $[\inner{ u_j, \D_x}, \inner{u_i, \D_j}] = -\inner{ u_i, \D_x }$, and equations (\ref{rel3}) and (\ref{rel4}). 
\end{proof}

A similar argument can be used to obtain the following result:
\begin{proposition}
The following equality holds for the opposite order:
\begin{equation*}
G_{\varepsilon_a}^{\ast}G_{\varepsilon_a}^{\pha}=\pi_{\sym(2k)}\brac{ \sum_{j=1}^k \brac{b_j  \inner{ \D_j, \D_x}\inner{ u_j, \D_x}+d_j \inner{ u_j, \D_x}\inner{ \D_j, \D_x}}},
\end{equation*}
for some coefficients $b_j, d_j\in \mcU'(\h)$.
\end{proposition}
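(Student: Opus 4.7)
The plan is to extend the strategy of the preceding proposition to the composition
\begin{equation*}
G^{\ast}_{\varepsilon_a}G_{\varepsilon_a}^{\pha}=\pi_{\sym(2k)}\inner{\D_a,\D_x}\pi_{\sym(2k)}\inner{u_a,\D_x},
\end{equation*}
the new feature being the middle $\pi_{\sym(2k)}$ that has to be commuted through. First I would evaluate both sides on a test function $f_\lambda\in\mcH_\lambda$ and use the fact that $\pi_{\sym(2k)}\inner{u_a,\D_x}f_\lambda\in\mcH_{\lambda+\varepsilon_a}$ is simplicial harmonic. This allows me to substitute, for the outer $G^{\ast}_{\varepsilon_a}$, its closed-form expansion derived earlier in this section, namely
\begin{equation*}
G^{\ast}_{\varepsilon_a}=\inner{\D_a,\D_x}+\sum_{a<i_1<\cdots<i_s\le k}\frac{\inner{u_{i_1},\D_a}\inner{u_{i_2},\D_{i_1}}\cdots\inner{\D_{i_s},\D_x}}{(\mE_a-\mE_{i_1}+i_1-a)\cdots(\mE_a-\mE_{i_s}+i_s-a)},
\end{equation*}
turning the left-hand side into a finite explicit sum of operator products acting on $f_\lambda$.

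Next I would move the innermost differential $\inner{\D_{i_s},\D_x}$ (respectively $\inner{\D_a,\D_x}$ for the leading summand) rightward through $\pi_{\sym(2k)}\inner{u_a,\D_x}$. Expanding $\pi_{\sym(2k)}$ as the normal-ordered product of Proposition~\ref{prop_extremal_sym(2k)} and repeatedly applying the elementary identities
\begin{equation*}
[\inner{\D_j,\D_x},\inner{u_p,\D_q}]=\delta_{jp}\inner{\D_q,\D_x},\qquad [\inner{\D_j,\D_x},\norm{u_p}^2]=2\delta_{jp}\inner{u_p,\D_x},
\end{equation*}
\begin{equation*}
[\inner{\D_j,\D_x},\inner{u_p,u_q}]=\delta_{jp}\inner{u_q,\D_x}+\delta_{jq}\inner{u_p,\D_x},\qquad [\inner{\D_a,\D_x},\inner{u_a,\D_x}]=\Delta_x,
\end{equation*}
each summand is rewritten as a composition of a leftmost part, built from the $\pi_\alpha$ building blocks of $\pi_{\sym(2k)}$, followed by either $\inner{\D_j,\D_x}\inner{u_j,\D_x}$ or $\inner{u_j,\D_x}\inner{\D_j,\D_x}$ for some $1\le j\le k$. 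Pulling the outer $\pi_{\sym(2k)}$ back to the front and invoking relations (\ref{rel1})--(\ref{rel4}) then kills every term whose leftmost factor is $\inner{u_p,\D_q}$ with $p<q$, $\inner{\D_p,\D_q}$ or $\inner{u_p,u_q}$, which is precisely what is needed to reach the claimed canonical form.

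The hard part is the bookkeeping: the Euler-operator denominators coming from the closed form of $G^{\ast}_{\varepsilon_a}$ must be combined with the additional Euler-operator factors produced by the successive commutators with the factors of $\pi_{\sym(2k)}$, in order to read off the final coefficients $b_j,d_j\in\mcU'(\h)$. The reason both orderings of $\inner{\D_j,\D_x}$ and $\inner{u_j,\D_x}$ appear in the final answer, in contrast to the previous proposition where only one ordering remains, is exactly this middle $\pi_{\sym(2k)}$: commuting $\inner{\D_a,\D_x}$ through it yields ``uncommuted'' contributions that preserve the $\inner{\D_j,\D_x}\inner{u_j,\D_x}$ ordering, together with correction terms, arising from the identities above, that produce the reversed ordering $\inner{u_j,\D_x}\inner{\D_j,\D_x}$.
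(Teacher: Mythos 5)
Your overall strategy --- commute the differential factors through the middle projector using the elementary $\sym(2k+2)$ commutators, then dispose of the debris using (\ref{rel1})--(\ref{rel4}) and the fact that the argument is simplicial-harmonic-valued --- is the one the paper intends (its own justification is only the remark that ``a similar argument'' to the previous proposition applies). However, the mechanism you give for discarding terms in the last step is wrong in a way that would derail the computation. The projector annihilates, from the left, only the \emph{negative} root vectors: $\pi_{\sym(2k)}\inner{u_q,\D_p}=0$ for $p<q$ and $\pi_{\sym(2k)}\inner{u_p,u_q}=0$, which are (\ref{rel3})--(\ref{rel4}). The operators you list, $\inner{u_p,\D_q}$ with $p<q$ and $\inner{\D_p,\D_q}$, are \emph{positive} root vectors; (\ref{rel1})--(\ref{rel2}) say they annihilate $\pi_{\sym(2k)}$ when they stand to its \emph{left}, and $\pi_{\sym(2k)}\inner{u_p,\D_q}$, $\pi_{\sym(2k)}\inner{\D_p,\D_q}$ are in general nonzero. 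Such factors disappear only after being commuted all the way to the right, where they kill the $\mcH_\lambda$-valued argument; the commutators generated along the way contribute to the coefficients $b_j,d_j$ and cannot simply be dropped. If you discard these terms as stated, you are computing a different operator, and the fact that what remains happens to have the advertised shape proves nothing about $G^{\ast}_{\varepsilon_a}G_{\varepsilon_a}$.

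A related structural point: once you replace the outer $G^{\ast}_{\varepsilon_a}$ by its closed-form expansion, there is no outer $\pi_{\sym(2k)}$ left to ``pull back to the front''; the correction terms of that expansion, each beginning with a negative root vector $\inner{u_{i_1},\D_a}$ with $a<i_1$, are exactly the surviving remnant of that projector, so you would have to reassemble it by hand. It is cleaner to keep $G^{\ast}_{\varepsilon_a}=\pi_{\sym(2k)}\inner{\D_a,\D_x}$ unexpanded: writing the middle projector as $1+N$, where every term of $N$ carries a negative root vector on its far left and a positive root vector on its far right, the positive tails die on $f_\lambda$ after commuting past $\inner{u_a,\D_x}$, the negative heads die against the genuinely present outer $\pi_{\sym(2k)}$ by (\ref{rel3})--(\ref{rel4}) after commuting past $\inner{\D_a,\D_x}$, and what survives is $\pi_{\sym(2k)}$ applied to a quadratic expression in $\inner{u_j,\D_x}$, $\inner{\D_j,\D_x}$ and $\Delta_x$ with coefficients in $\mcU'(\h)$. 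One further step you omit: a homogeneity argument (the composite preserves the multidegree $\lambda$) is needed to see that the surviving quadratic terms are index-matched, after which $\Delta_x=[\inner{\D_j,\D_x},\inner{u_j,\D_x}]$ lets you absorb the Laplacian into the two orderings. Your explanation of why both orderings occur here, unlike in the previous proposition, is correct.
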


Using these propositions, we expect the higher spin Laplace operator to be of the form:
\begin{equation}
\label{Dlambdaprop}
\mcD_\lambda:= \Delta_x +  \pi_{\sym(2k)}\brac{\sum_{p=1}^k c_p\inner{u_p, \D_x}\inner{\D_p, \D_x}},
\end{equation}
where the constants $c_p$ are coefficients in terms of Euler operators which have to be determined. 
These constants have to be chosen such that $\mcD_\lambda$ itself is conformally invariant. 
This means that infinitesimal rotations, translations, dilations and the special conformal transformations should be generalised 
symmetries for this operator, see also \cite{DER1} for more details. Independent of the choice of the $c_p$, we already have the following properties: this operator is clearly invariant under the infinitesimal rotations given by
\begin{equation*}
dL(e_{ij}):=L_{ij}^x + \sum_{p=1}^k L_{ij}^{u_p}.
\end{equation*}
Indeed, $[\inner{\D_{p}, \D_x}, dL(e_{ij})] = 0$, $[\inner{u_p, \D_x}, dL(e_{ij})] = 0$, and $dL(e_{ij})$ commutes with all elements 
of the operator algebra $\sym(2k)$. All terms of $\mcD_\lambda$ obviously commute with $\D_{x_j}$, the infinitesimal translations, 
as only the dummy variables and the operators $\D_{x_i}$ appear. Since $\mcD_\lambda$ is a homogeneous second-order 
differential operator, we have 
\begin{equation*}
\mcD_{\lambda} \left( \mE_x + \frac{m-2}{2} \right) = \left( \mE_x + \frac{m+2}{2} \right)\mcD_{\lambda},
\end{equation*} 
so that also infinitesimal dilations are generalised symmetries. The only thing that remains to be proved is that the special conformal 
transformations are generalised symmetries. The inversion $\mcJ_{\lambda}$ for any function $f$ is defined as follows 
(see appendix \ref{Appendix_inversion}):
\begin{equation}\label{Harmonic_inversion}
\mcJ_{\lambda} f(x, u_1, \ldots, u_k) = |x|^{2w} f\left( \frac{x}{|x|^2}, \frac{x u_1 x}{|x|^2}, \ldots, \frac{x u_k x}{|x|^2} \right),
\end{equation}
where $w$ is the conformal weight for the higher spin Laplace operator, namely $w=1-\frac{m}{2}$.
Special conformal transformations are given by the consecutive action of this inversion, an infinitesimal translation 
and another inversion, so we want to prove that there exists an operator $d$ such that 
$\mcD_\lambda \mcJ_\lambda \partial_{x_j} \mcJ_\lambda = d \mcD_\lambda$. 
This comes down to determining the constants $c_p$ such that this operator $d$ exists. First of all, we have that 
\begin{equation*}
\mcJ_\lambda \D_{x_j} \mcJ_\lambda = |x|^2 \partial_{x_j}-x_j (2 \mE_x + m-2)+2 \sum_{p=1}^k \inner{ u_p, x} \D_{u_{pj}} - 
2 \sum_{p=1}^k u_{pj} \inner{ x, \D_p}.
\end{equation*}
We will calculate the action of each of the terms appearing in \eqref{Dlambdaprop} on these special conformal transformations which requires some technical lemmas throughout the pages below. First of all, note that the following relations hold:
\begin{align*}
\Delta_x |x|^2 &= 2m + 4 \mE_x + |x|^2 \Delta_x \\
\Delta_x x_j &= 2 \D_{x_j} + x_j \Delta_x \\
\Delta_x \inner{ u_p, x } &= 2 \inner{ u_p, \D_x } + \inner{ u_p, x } \Delta_x \\
\Delta_x \inner{ x, \D_p } &= 2 \inner{ \D_p, \D_x } + \inner{ x, \D_p } \Delta_x.
\end{align*}
Using these relations, we obtain our first technical lemma. The proof can be obtained through tedious but straightforward computations.
\begin{lemma}
The following relation holds: 
\begin{equation*}
 \Delta_x \mcJ_\lambda \D_{x_j} \mcJ_\lambda = \left( \mcJ_\lambda \D_{x_j} \mcJ_\lambda - 4x_j \right) \Delta_x + 4 \sum_{p=1}^k \inner{ u_p, \D_x} \D_{u_{pj}} - 4 \sum_{p=1}^k u_{pj} \inner{ \D_p, \D_x }.
\end{equation*}
\end{lemma}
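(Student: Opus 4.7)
The plan is to apply $\Delta_x$ to the explicit expression for $\mcJ_\lambda\D_{x_j}\mcJ_\lambda$ given just above the lemma, and push the resulting leftmost $\Delta_x$ through each of the four summands using the four commutation relations listed immediately before the statement (together with the elementary identity $[\Delta_x,\mE_x]=2\Delta_x$). The goal is to isolate an "already ordered" piece of the form $(\mcJ_\lambda\D_{x_j}\mcJ_\lambda)\Delta_x$ and then check that what remains collapses to exactly the three explicit corrections on the right-hand side.

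First I would handle the two purely $x$-dependent terms $\Delta_x|x|^2\D_{x_j}$ and $-\Delta_x x_j(2\mE_x+m-2)$, since these are the only ones that do not commute nicely with $\Delta_x$. Pushing $\Delta_x$ to the right in the first term costs the correction $(2m+4\mE_x)\D_{x_j}$ by the first relation. In the second term one must move $\Delta_x$ past $x_j$ (cost $2\D_{x_j}$ by the second relation) and then past $(2\mE_x+m-2)$ (cost $4\Delta_x$, using $[\Delta_x,\mE_x]=2\Delta_x$). Using $\D_{x_j}\mE_x=(\mE_x+1)\D_{x_j}$ to normalise orders, the two blocks combine into $|x|^2\D_{x_j}\Delta_x - x_j(2\mE_x+m-2)\Delta_x - 4x_j\Delta_x$, with every $\mE_x\D_{x_j}$ and $\D_{x_j}$ leftover cancelling against its counterpart.

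Next I would treat the two dummy-variable terms. Since $\D_{u_{pj}}$ and $u_{pj}$ both commute with everything in the $x$-variable, $\Delta_x$ slides freely past them, and only the third and fourth commutation relations are needed to move $\Delta_x$ past $\inner{u_p,x}$ and $\inner{x,\D_p}$ respectively. Each produces exactly one correction term, and the prefactor $2$ combined with the commutator $2$ reproduces the factor $4$ appearing in front of the two sums on the right-hand side. Summing all four contributions reassembles $\mcJ_\lambda\D_{x_j}\mcJ_\lambda\cdot\Delta_x$ on the leading block and leaves precisely $-4x_j\Delta_x + 4\sum_p\inner{u_p,\D_x}\D_{u_{pj}} - 4\sum_p u_{pj}\inner{\D_p,\D_x}$.

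The main obstacle is simply the bookkeeping in the first block: three kinds of leftover arise there ($x_j\Delta_x$, $\mE_x\D_{x_j}$ and $\D_{x_j}$ pieces), and one has to be careful that the $\mE_x\D_{x_j}$ and $\D_{x_j}$ contributions cancel exactly, leaving only the $-4x_j\Delta_x$ term. Once this cancellation is verified the identity follows by inspection; everything else is a mechanical application of the four listed relations.
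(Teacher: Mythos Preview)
Your proposal is correct and follows exactly the approach the paper indicates: the paper simply says ``the proof can be obtained through tedious but straightforward computations'' using the four listed commutation relations, and that is precisely what you do. Your bookkeeping for the cancellation of the $\mE_x\D_{x_j}$ and $\D_{x_j}$ leftovers in the first block is accurate, and the remaining terms assemble as claimed.
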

From this result, we see that $\mcJ_\lambda \D_{x_j} \mcJ_\lambda - 4x_j$ will be the candidate for the operator $d$, obviously, if the two  remaining summations can be canceled out by the action of the other terms of \eqref{Dlambdaprop} on the special conformal transformations. 
\begin{lemma}
The following relation holds: 
\begin{align*}
\pi_{\sym(2k)}\inner{ u_a, \D_x } \inner{ \D_a, \D_x } \mcJ_\lambda \D_{x_j} \mcJ_\lambda =& 
\pi_{\sym(2k)}\inner{ u_a, \D_x }\Big( (\mcJ_\lambda \D_{x_j} \mcJ_\lambda -2x_j)\inner{ \D_a, \D_x } \\
 &+ (m+2\mE_a-2a+2) \D_{u_{aj}} +2 \sum_{p=a+1}^k \inner{ u_p, \D_a } \D_{u_{pj}}\Big).
\end{align*}
\end{lemma}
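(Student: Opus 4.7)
The plan is to compute $\inner{\D_a,\D_x}\mcJ_\lambda \D_{x_j}\mcJ_\lambda$ by commuting $\inner{\D_a,\D_x}$ through each of the four summands of the explicit expression for $\mcJ_\lambda \D_{x_j}\mcJ_\lambda$ recalled just above, then premultiply by $\pi_{\sym(2k)}\inner{u_a,\D_x}$ and simplify using the fact that the whole operator is ultimately applied to simplicial-harmonic-valued fields $f\in\mcC^{\infty}(\mR^m,\mcH_\lambda)$. The elementary commutators that drive the calculation are
\begin{align*}
[\inner{\D_a,\D_x},|x|^2] &= 2\inner{x,\D_a}, & [\inner{\D_a,\D_x},x_j] &= \D_{u_{aj}}, \\
[\inner{\D_a,\D_x},\mE_x] &= \inner{\D_a,\D_x}, & [\inner{\D_a,\D_x},u_{pj}] &= \delta_{ap}\D_{x_j},\\
[\inner{\D_a,\D_x},\inner{u_p,x}] &= \inner{u_p,\D_a} + \delta_{ap}(\mE_x + m), & [\inner{\D_a,\D_x},\inner{x,\D_p}] &= \inner{\D_a,\D_p},
\end{align*}
each of which follows in one line from the identity $\inner{\D_a,\D_x}=\sum_i\D_{u_{ai}}\D_{x_i}$. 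Applying the Leibniz rule to each of the four pieces of $\mcJ_\lambda \D_{x_j}\mcJ_\lambda$, collapsing the $\inner{x,\D_a}$-contributions via $\inner{x,\D_a}\D_{x_j}-\D_{x_j}\inner{x,\D_a}=-\D_{u_{aj}}$, and pulling the diagonal $p=a$ summand of $\sum_p\inner{u_p,\D_a}\D_{u_{pj}}$ out as $\mE_a\D_{u_{aj}}$, produces the intermediate operator identity
\begin{equation*}
\inner{\D_a,\D_x}\mcJ_\lambda \D_{x_j}\mcJ_\lambda = (\mcJ_\lambda \D_{x_j}\mcJ_\lambda - 2x_j)\inner{\D_a,\D_x} + (m+2\mE_a)\D_{u_{aj}} + 2\sum_{p \neq a}\inner{u_p,\D_a}\D_{u_{pj}} - 2\sum_{p=1}^k u_{pj}\inner{\D_a,\D_p}.
\end{equation*}

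It then remains to premultiply by $\pi_{\sym(2k)}\inner{u_a,\D_x}$ and compare with the claimed right-hand side. Two observations eliminate the apparent discrepancy on $\mcH_\lambda$-valued fields. First, simplicial harmonicity gives $\inner{\D_a,\D_p}f=0$ for every $p$, so the entire tail $-2\sum_p u_{pj}\inner{\D_a,\D_p}$ kills $f$. Second, for each index $p<a$ the operator $\inner{u_p,\D_a}$ is a positive root vector of $\sym(2k)$ in the chosen ordering, so $\inner{u_p,\D_a}f=0$; combined with the one-line identity $[\inner{u_p,\D_a},\D_{u_{pj}}]=-\D_{u_{aj}}$, this yields $\inner{u_p,\D_a}\D_{u_{pj}}f=-\D_{u_{aj}}f$. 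Summing over the $a-1$ admissible indices contributes $-2(a-1)\D_{u_{aj}}f$, which merges with $(m+2\mE_a)\D_{u_{aj}}$ to give precisely the coefficient $(m+2\mE_a-2a+2)$ announced in the lemma, while the sum over $p>a$ survives untouched.

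The main obstacle is the careful bookkeeping in the commutator step, as the coefficient $(m+2\mE_a)\D_{u_{aj}}$ only surfaces after two independent cancellations: the $(2\mE_x+m-2)$ piece produced by $[\inner{\D_a,\D_x},x_j(2\mE_x+m-2)]$ cancelling against $2(\mE_x+m)$ from the $\delta_{ap}$-part of $[\inner{\D_a,\D_x},\inner{u_p,x}]$, and the $\inner{x,\D_a}\D_{x_j}$ versus $\D_{x_j}\inner{x,\D_a}$ pair arising from the $|x|^2\D_{x_j}$ and $u_{aj}\inner{x,\D_a}$ summands. Once this intermediate identity is in place the remainder is conceptually light: only the kernel characterisation of $\mcH_\lambda$ is used, so the extremal projector $\pi_{\sym(2k)}$ itself never needs to be commuted past anything nor expanded into its explicit product form.
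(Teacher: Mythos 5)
Your proposal is correct and follows essentially the same route as the paper: commute $\inner{\D_a,\D_x}$ through the four summands of $\mcJ_\lambda\D_{x_j}\mcJ_\lambda$ via the same elementary relations, then discard everything that annihilates $\mcH_\lambda$-valued fields. In fact you are slightly more explicit than the paper about the final step, namely that for $p<a$ one must first rewrite $\inner{u_p,\D_a}\D_{u_{pj}}=\D_{u_{pj}}\inner{u_p,\D_a}-\D_{u_{aj}}$ before dropping the defining operator, which is exactly where the shift $-2(a-1)$ in the coefficient $(m+2\mE_a-2a+2)$ comes from.
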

\begin{proof}
To compute this expression, we use the following relations:
\begin{align*}
\inner{ \D_a, \D_x } |x|^2 \D_{x_j} &= 2\inner{ x, \D_a} \D_{x_j} + |x|^2 \D_{x_j} \inner{ \D_a, \D_x } \\
\inner{ \D_a, \D_x } x_j &= \D_{u_{aj}} + x_j \inner{ \D_a, \D_x } \\
\inner{ \D_a, \D_x } \inner{ u_a, x } \D_{u_{aj}} &= (m+\mE_a + \mE_x) \D_{u_{aj}} + \inner{ u_a, x } \D_{u_{aj}} \inner{ \D_a, \D_x }\\
\inner{ \D_a, \D_x } u_{aj} \inner{ x, \D_a } &= \D_{u_{aj}} + \inner{ x, \D_a } \D_{x_j} + u_{aj} \Delta_a + u_{aj} \inner{ x, \D_a } \inner{ \D_a, \D_x }.
\end{align*}
Also, if $a \neq p$, we have the following two identities:
\begin{align*}
\inner{ \D_a, \D_x } \inner{ u_p, x } \D_{u_{pj}} &= \inner{ u_p, \D_a } \D_{u_{pj}} + \inner{ u_p, x } \D_{u_{pj}} \inner{ \D_a, \D_x } \\
\inner{ \D_a, \D_x } u_{pj} \inner{ x, \D_p } &= u_{pj} \inner{ \D_a, \D_p } + u_{pj} \inner{ x, \D_p } \inner{ \D_a, \D_x }.
\end{align*}
Using these relations, we can rewrite $\pi_{\sym(2k)}\inner{ u_a, \D_x } \inner{ \D_a, \D_x } \mcJ_\lambda \D_{x_j} \mcJ_\lambda$ as follows:
\begin{align*}
&\pi_{\sym(2k)}\inner{ u_a, \D_x } \inner{ \D_a, \D_x } \mcJ_\lambda \D_{x_j} \mcJ_\lambda \\
&=\pi_{\sym(2k)}\inner{ u_a, \D_x } \bigg( 2\inner{ x, \D_a} \D_{x_j} + |x|^2 \D_{x_j} \inner{ \D_a, \D_x } - (2 \mE_x+m-2)\D_{u_{aj}} \\ 
&- x_j(2\mE_x+m-2+2)\inner{ \D_a, \D_x } +2(m+\mE_a + \mE_x) \D_{u_{aj}} + 2 \sum_{p\neq a,p=1}^k 
\inner{ u_p, \D_a } \D_{u_{pj}} - 2\D_{u_{aj}} \\ 
&+ 2 \sum_{p=1}^k \inner{ u_p, x } \partial_{u_{pj}} \inner{ \D_a, \D_x } 
 - 2\inner{ x, \D_a } \D_{x_j} - 2 \sum_{p=1}^k u_{pj} \inner{ \D_a, \D_p } - 2 \sum_{p=1}^k u_{pj} \inner{ x, \D_p } \inner{ \D_a, \D_x } \bigg).
\end{align*}
Remember that from the definition of $\mcD_\lambda$, the functions on which all of these operators act are elements of 
$\mcC^\infty(\mR^m, \mcH_\lambda)$. This means that any term ending with a defining operator of $\mcH_\lambda$ 
acts trivially and can be omitted. This leads to the desired result.
\end{proof}
\begin{lemma}
The following relation holds:
\begin{align*}
\pi_{\sym(2k)}&\inner{ u_a, \D_x }\left[(\mcJ_\lambda \D_{x_j} \mcJ_\lambda -2x_j)\inner{ \D_a, \D_x }\right]\\
=&\ \pi_{\sym(2k)}(\mcJ_\lambda \D_{x_j} \mcJ_\lambda -4x_j)\inner{ u_a, \D_x } \\
&+ \pi_{\sym(2k)}\bigg(u_{aj}(-m-2\mE_a+2a-2) -2 \sum_{p=a+1}^k u_{pj} \inner{ u_a, \D_p } \D_{u_{pj}}\bigg)\inner{ \D_a, \D_x }.
\end{align*}
\end{lemma}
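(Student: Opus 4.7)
The plan is a direct calculation, following the strategy of the preceding lemma, in which $\inner{u_a,\D_x}$ is commuted rightward through the operator $(\mcJ_\lambda\D_{x_j}\mcJ_\lambda - 2x_j)$ using its explicit expansion
$$\mcJ_\lambda\D_{x_j}\mcJ_\lambda - 2x_j = |x|^2\D_{x_j} - x_j(2\mE_x+m) + 2\sum_{p=1}^k\inner{u_p,x}\D_{u_{pj}} - 2\sum_{p=1}^k u_{pj}\inner{x,\D_p}.$$
The elementary commutators I would record first are $[\inner{u_a,\D_x},|x|^2]=2\inner{u_a,x}$, $[\inner{u_a,\D_x},x_j]=u_{aj}$, $[\inner{u_a,\D_x},\mE_x]=\inner{u_a,\D_x}$, $[\inner{u_a,\D_x},\inner{u_p,x}]=\inner{u_a,u_p}$, and $[\inner{u_a,\D_x},\inner{x,\D_p}]=\inner{u_a,\D_p}$.

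After substituting these one obtains an identity of the form $\pi_{\sym(2k)}\inner{u_a,\D_x}(\mcJ_\lambda\D_{x_j}\mcJ_\lambda-2x_j)\inner{\D_a,\D_x} = \pi_{\sym(2k)}(\mcJ_\lambda\D_{x_j}\mcJ_\lambda-2x_j)\inner{u_a,\D_x}\inner{\D_a,\D_x} + \pi_{\sym(2k)}\,C\,\inner{\D_a,\D_x}$, where the correction $C$ collects the commutator contributions. Its dominant piece is a second copy of $-2x_j\inner{u_a,\D_x}$ produced by $[\inner{u_a,\D_x},-x_j(2\mE_x+m-2)]$; combining this with the original $-2x_j\inner{u_a,\D_x}$ is what generates the coefficient $-4x_j$ of the first term on the right-hand side. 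The remaining commutator pieces then feed naturally into the $u_{aj}$-term and the summation over $p$.

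The simplification is performed by invoking (i) property~(\ref{rel4}) to kill all $\pi_{\sym(2k)}\inner{u_a,u_p}(\cdots)\inner{\D_a,\D_x}$ contributions, and (ii) the convention that operators whose rightmost factor is a defining operator of $\mcH_\lambda$ (such as $\inner{u_p,\D_a}$ with $p<a$, $\inner{\D_i,\D_j}$, or $\inner{u_i,u_j}$) act trivially on the target space $\mcC^\infty(\mR^m,\mcH_\lambda)$; this is precisely what truncates the cross-term summation to $p>a$ in the final expression. The main obstacle is careful bookkeeping of the Euler-operator factors: the coefficient $-m-2\mE_a+2a-2$ of the $u_{aj}$-term is not immediate from the elementary commutators alone, and the shift by $2a$ will arise from tracking the Euler contributions produced when $\inner{u_a,\D_x}$ is commuted through the $\pi_{\varepsilon_i-\varepsilon_a}$ factors with $i<a$ in the product expansion of $\pi_{\sym(2k)}$ given by Proposition~\ref{prop_extremal_sym(2k)}.
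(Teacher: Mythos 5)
Your overall strategy --- expand $\mcJ_\lambda\D_{x_j}\mcJ_\lambda-2x_j$ explicitly, commute $\inner{u_a,\D_x}$ to the right, and simplify under $\pi_{\sym(2k)}$ --- is exactly the paper's, and your account of how the coefficient $-4x_j$ arises is correct. But the proposal has two genuine gaps at the points that actually require care. First, your commutator table fails at the one entry that matters: $[\inner{u_a,\D_x},\inner{x,\D_p}]=\inner{u_a,\D_p}$ holds only for $p\neq a$; for $p=a$ the commutator is $\mE_a-\mE_x$, and the extra $-\mE_x$ is precisely what cancels the $2\mE_x$ coming from the $-x_j(2\mE_x+m)$ term, leaving $-u_{aj}(m+2\mE_a)$. (Likewise $[\inner{u_a,\D_x},\D_{u_{aj}}]=-\D_{x_j}$ is needed so that the $2\inner{u_a,x}\D_{x_j}$ produced by the $|x|^2\D_{x_j}$ term cancels.)

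Second, and more seriously, your explanation of the shift $2a-2$ is wrong. Nothing is ever commuted through the internal factors $\pi_{\varepsilon_i-\varepsilon_a}$ of the extremal projector: $\pi_{\sym(2k)}$ sits untouched at the far left of every term on both sides of the identity and enters only through its annihilation properties (\ref{rel1})--(\ref{rel4}). The shift comes instead from the cross terms with $p<a$. The commutation produces $-2\sum_{p\neq a}u_{pj}\inner{u_a,\D_p}$; for $p<a$ one writes $u_{pj}\inner{u_a,\D_p}=\inner{u_a,\D_p}u_{pj}-u_{aj}$ and uses $\pi_{\sym(2k)}\inner{u_a,\D_p}=0$ (relation (\ref{rel3}), since $\inner{u_a,\D_p}$ with $p<a$ is a negative root vector). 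Each of the $a-1$ admissible values of $p$ thus contributes $+2u_{aj}$, turning $-u_{aj}(m+2\mE_a)$ into $u_{aj}(-m-2\mE_a+2a-2)$, while the $p>a$ terms survive and give the remaining sum. For the same reason your mechanism (ii) for truncating the sum to $p>a$ is misattributed: the factor to be eliminated for $p<a$ is $\inner{u_a,\D_p}$, a lowering operator rather than a defining operator of $\mcH_\lambda$, it is not the rightmost factor (it is followed by $\inner{\D_a,\D_x}$), and it is killed by $\pi_{\sym(2k)}$ on the left, not by the target space on the right.
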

\begin{proof}
To compute this expression, we can use the relations
\begin{align*}
\inner{ u_a, \D_x } |x|^2 \D_{x_j} &= 2\inner{ x, u_a} \D_{x_j} + |x|^2 \D_{x_j} \inner{ u_a, \D_x } \\
\inner{ u_a, \D_x } x_j &= u_{aj} + x_j \inner{ u_a, \D_x } \\
\inner{ u_a, \D_x } \inner{ u_a, x } \D_{u_{aj}} &= |u_a|^2 \D_{u_{aj}} -\inner{ u_a, x } \D_{x_j} + \inner{ u_a, x } \D_{u_{aj}} \inner{ u_a, \D_x }\\
\inner{ u_a, \D_x } u_{aj} \inner{ x, \D_a } &= u_{aj}(\mE_a-\mE_x) + u_{aj} \inner{ x, \D_a } \inner{ u_a, \D_x }
\end{align*}
and also, if $a \neq p$, the following identities hold:
\begin{align*}
\inner{ u_a, \D_x } \inner{ u_p, x } \D_{u_{pj}} &= \inner{ u_a, u_p } \D_{u_{pj}} + \inner{ u_p, x } \D_{u_{pj}} \inner{ u_a, \D_x } \\
\inner{ u_a, \D_x } u_{pj} \inner{ x, \D_p } &= u_{pj} \inner{ u_a, \D_p } + u_{pj} \inner{ x, \D_p } \inner{ u_a, \D_x }.
\end{align*}
Using all of these, we obtain
\begin{align*}
\pi_{\sym(2k)}&\inner{ u_a, \D_x }\left[(\mcJ_\lambda \D_{x_j} \mcJ_\lambda -2x_j)\inner{ \D_a, \D_x }\right] \\
=& \pi_{\sym(2k)} \bigg[ 2\inner{ x, u_a} \D_{x_j} + |x|^2 \D_{x_j} \inner{ u_a, \D_x } - u_{aj}(2\mE_x+m) - x_j(2\mE_x+m+2) \inner{ u_a, \D_x }\\
& +2\sum_{p=1}^k \inner{ u_p, x }\D_{u_{pj}} \inner{ u_a, \D_x } +2 \sum_{p=1}^k \inner{ u_a, u_p } \D_{u_{pj}} -2 \inner{ u_a, x } \D_{x_j}\\
&-2 \sum_{p=1}^k u_{pj} \inner{ x, \D_p } \inner{ u_a, \D_x } - 2\sum_{p\neq a, p=1}^k u_{pj} \inner{ u_a, \D_p } -2 u_{aj}(\mE_a-\mE_x) \bigg] \inner{ \D_a, \D_x }.
\end{align*}
Using the fact that $\pi_{\sym(2k)}$ annihilates any negative root vector the expression above can be simplified to complete the proof.
\end{proof}
Altogether, we find
\begin{align*}
\mcD_\lambda\mcJ_\lambda \D_{x_j} \mcJ_\lambda=&(\mcJ_\lambda \D_{x_j} \mcJ_\lambda -4x_j) \mcD_\lambda \\
&+ 4 \sum_{p=1}^k \inner{ u_p, \D_{x}} \D_{u_{pj}} - 4 \sum_{p=1}^k u_{pj} \inner{ \D_p, \D_x } \\
&+ \pi_{\sym(2k)} \sum_{a=1}^k c_a\left( \inner{ u_a, \D_x } (m+2\mE_a-2a+2) \D_{u_{aj}} - 2 \sum_{p=a+1}^k \inner{ u_p, \D_x } \D_{u_{pj}} \right. \\
&\left. + u_{aj}(-m-2\mE_a+2a-2)\inner{ \D_a, \D_x } +2 \sum_{p=a+1}^k u_{pj} \inner{ \D_p, \D_x } \right).
\end{align*}
Note that $\Delta_x = \pi_{\sym(2k)} \Delta_x$. Also, the coefficients of $\inner{ u_b, \D_x } \D_{u_{bj}}$ and 
$u_{bj} \inner{ \D_b, \D_x }$ are always the opposite of each other. We want all coefficients of these terms to be zero. This means that for the coefficient of $\inner{ u_1, \D_x } \D_{u_{1j}}$, we obtain
\begin{equation*}
4+c_1(2\mE_1+m-2) = 0,
\end{equation*}
Hence, 
\begin{equation*}
c_1 = \frac{-4}{2\mE_1+m-2}.
\end{equation*}
For the coefficient of $\inner{ u_2, \D_x } \D_{u_{2j}}$, we find
\begin{equation*}
4-2c_1+c_2(m+4\mE_2-4) = 0,
\end{equation*}
or, after substituting the expression for $c_1$,
\begin{equation*}
c_2 = \frac{-4(2\mE_1+m)}{(2 \mE_2+m-4)(2\mE_1+m-2)}.
\end{equation*}
Continuing inductively, this leads to 
\begin{equation*}
4-2c_1-2c_2- \cdots - 2c_{p-1}+c_p(2\mE_p+m-2p) = 0,
\end{equation*}
whence $c_p$ is given by
\begin{equation*}
c_p = \frac{-4}{2\mE_p+m-2p}\prod_{j=1}^{p-1}\frac{2\mE_j+m-2j+2}{2\mE_j+m-2j}.
\end{equation*}
We finally arrive at the following theorem.
\begin{theorem}
The higher spin Laplace operator in $k$ vector variables is explicitly given by the following formula:
\begin{equation*}
\mcD_\lambda:= \pi_{\sym(2k)}\left(\Delta_x + \sum_{p=1}^k c_p \inner{ u_p, \D_x } \inner{ \D_p, \D_x } \right),
\end{equation*}
with $\pi_{\sym(2k)}$ given in proposition \ref{prop_extremal_sym(2k)} and where the constants $c_p$ are given by
\begin{equation*}
c_p = \frac{-4}{2\mE_p+m-2p}\prod_{j=1}^{p-1}\frac{2\mE_j+m-2j+2}{2\mE_j+m-2j}.
\end{equation*}
\end{theorem}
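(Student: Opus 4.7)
The plan is to combine all the ingredients assembled in this section, namely the ansatz \eqref{Dlambdaprop}, the fact that it is already invariant under rotations, translations and dilations for any choice of the $c_p$, and the three technical lemmas computing $\Delta_x \mcJ_\lambda \partial_{x_j}\mcJ_\lambda$ and $\pi_{\sym(2k)}\inner{u_a,\D_x}\inner{\D_a,\D_x}\mcJ_\lambda \partial_{x_j}\mcJ_\lambda$. The task reduces to pinning down the coefficients $c_p$ by demanding that the special conformal transformations are generalised symmetries, and then solving the resulting recurrence in closed form.

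First, I would substitute the ansatz \eqref{Dlambdaprop} into $\mcD_\lambda \mcJ_\lambda \partial_{x_j}\mcJ_\lambda$, apply the three lemmas termwise, and collect contributions. From the displayed combined expression just before the theorem, the operator $\mcJ_\lambda \partial_{x_j}\mcJ_\lambda - 4x_j$ is forced to play the role of $d$, while the remaining terms are linear combinations of $\inner{u_b,\D_x}\partial_{u_{bj}}$ and $u_{bj}\inner{\D_b,\D_x}$ (these come in equal and opposite pairs, so they vanish simultaneously). Setting each such coefficient to zero yields, for every $p \in \{1,\dots,k\}$, the recurrence
\begin{equation*}
4 - 2c_1 - 2c_2 - \cdots - 2c_{p-1} + c_p(2\mE_p + m - 2p) = 0,
\end{equation*}
exactly as already extracted from the computation in the section.

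The main step is then to verify that the proposed product formula for $c_p$ solves this recurrence. Writing $a_j := 2\mE_j + m - 2j$, the claim amounts to showing by induction on $p$ that
\begin{equation*}
\sum_{j=1}^{p-1} c_j \;=\; 2 - 2\prod_{j=1}^{p-1}\frac{a_j+2}{a_j},
\end{equation*}
which is equivalent to the recurrence via $c_p = (-4 + 2\sum_{j<p} c_j)/a_p$. The base case $p=2$ reduces to $c_1 = -4/a_1$, matching the value already obtained in the section. For the inductive step, add $c_p = -\tfrac{4}{a_p}\prod_{j<p}\tfrac{a_j+2}{a_j}$ to the partial sum and use $1 + \tfrac{2}{a_p} = \tfrac{a_p+2}{a_p}$ to telescope the product. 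Treating the $a_j$ as commuting elements of $\mcU'(\h)$ presents no issue, since the Euler operators act diagonally on $\mcH_\lambda$ and commute among themselves.

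The hard part, in principle, would be the bookkeeping inside $\pi_{\sym(2k)}$: one must continually invoke the killing properties \eqref{rel1}--\eqref{rel4} (for instance to discard every term ending in a negative root vector or a defining operator of $\mcH_\lambda$) in order to reach the clean recurrence above. However, this work has already been done in the three preceding lemmas, so at the level of the theorem it only remains to extract the coefficients of $\inner{u_b,\D_x}\partial_{u_{bj}}$ in the combined expression, write down the resulting triangular system for $(c_1,\ldots,c_k)$, and solve it by the short induction sketched above. Uniqueness of $\mcD_\lambda$ up to a multiplicative constant (guaranteed by Theorem \ref{theorem_Branson}, together with the normalisation $\Delta_x = \pi_{\sym(2k)}\Delta_x$ fixing the leading term) then ensures that this is the operator sought.
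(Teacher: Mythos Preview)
Your proposal is correct and follows essentially the same route as the paper: derive the ansatz, observe automatic invariance under rotations, translations and dilations, use the three technical lemmas to compute $\mcD_\lambda \mcJ_\lambda \partial_{x_j}\mcJ_\lambda$, identify $d = \mcJ_\lambda \partial_{x_j}\mcJ_\lambda - 4x_j$, and solve the triangular system for the $c_p$. The only difference is cosmetic: the paper simply writes ``continuing inductively, this leads to'' and states the product formula, whereas you supply the explicit telescoping argument with $a_j = 2\mE_j + m - 2j$ to verify it; this is a welcome addition rather than a departure.
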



\section{Type A solutions of higher spin Laplace operators} 

As in any function theory, the study of polynomial solutions for the differential operator under consideration plays a crucial role. This is due to the fact that these 
are often used to decompose arbitrary solutions. We will therefore take a closer look at this problem in the case of the higher spin Laplace operator. In contrast to 
the case of the Laplace operator, for which homogeneous polynomial solutions of a fixed degree form an irreducible module for the orthogonal Lie algebra, the 
space of polynomial solutions for more complicated second order operators decomposes into a direct sum of many components. These can typically be divided in 2 
categories: solutions of type A (which can be seen as solutions satisfying extra gauge conditions) and solutions of type B (which are `induced' from type A 
solutions for a related operator by the action of a dual twistor operator). In this section we will therefore study these basic building blocks for general solutions, the 
{\em type A solutions}, hereby following the ideas used for the higher spin Dirac operators in \cite{DSER}. 
\begin{definition}\label{TypeA}
For a fixed integer highest weight $\lambda = (\lambda_1,\ldots,\lambda_k,0,\ldots,0)$ and an integer $\lambda_0 \geq \lambda_1$, we define the space 
\begin{equation*}
\ker_{\lambda_0}^A\mcD_\lambda := \mcP_{\lambda_0}(\mR^m,\mcH_\lambda) \cap \ker \brac{\Delta_x, \langle \partial_i, \partial_x \rangle: 1 \leq i \leq k}\ .
\end{equation*}
\end{definition}
It immediately follows from the definition of $\mcD_\lambda$ that $\ker_{\lambda_0}^A\mcD_\lambda$ is indeed a subspace of $\ker \mcD_\lambda$, containing solutions which are homogeneous in $x \in \mR^m$ of degree $\lambda_0$. Each type A solution is a polynomial in $(k+1)$ vector variables, the unknown $x$ and $k$ dummy variables $u_j$ fixing the values. From now on, we will often identify $x \equiv u_0$. This notation allows us to treat all variables (both $x$ and the dummies) on the same footing. We can then for instance consider the symplectic Lie algebra $\sym(2k+2)$ in the polynomial model from the previous section, with indices running from $0$ to $k$. As such, it is easily seen that the type A solutions are a subspace of the following space: 
\begin{definition}
The space of Howe harmonics in $(k+1)$ vector variables is defined as the subspace of all polynomials in $\mcP(\mR^{(k+1)m},\mC)$ which are annihilated by each of the constant coefficient differential operators in $\sym(2k+2)$. In other words: 
\begin{equation*}
\mcH^h\brac{\mR^{(k+1)m},\mC}:=\mcP\brac{\mR^{(k+1)m},\mC}\cap \ker\brac{\Delta_x,\inner{\D_i,\D_x},\inner{\D_i, \D_j} : 1 \leq i, j \leq k}\ .
\end{equation*}
\end{definition}
Note that these operators correspond to the positive roots $-\varepsilon_a -\varepsilon_b$, with $0 \leq a, b \leq k$. Also note that if we want to consider homogeneous subspaces, we use the notation $\mcH^h_{\lambda_1,\ldots,\lambda_k}\brac{\mR^{(k+1)m},\mC}$ for the space consisting of polynomials of 
degree $\lambda_i$ in $u_i$. \\ 
The space $\mcH^h\brac{\mR^{(k+1)m},\mC}$ is obviously not irreducible, but decomposes into irreducible modules under the action of $\so(m)$. Since we are 
working with a polynomial model in $(k+1)$ variables, it is reasonable to expect a decomposition into simplicial harmonics of the form 
$\mcH_{\mu}(\mR^{(k+1)m},\mC)$, where we use the notation $\mu=(\mu_0,\mu_1,\ldots,\mu_k)$ for a dominant weight with $(k+1)$ non-trivial integer 
entries. As a matter of fact, in what follows we will show that the action of the subalgebra $\gl(k+1)$ inside $\sym(2k+2)$ on an arbitrary element in 
$\mcH_{\mu}(\mR^{(k+1)m},\mC)$ generates a finite-dimensional $\gl(k+1)$-module inside the space of Howe harmonics. 
\\
Before doing so, we will first introduce a few notations. Recall that the general Lie algebra $\gl(k+1)$ is spanned by the basis elements $E_{ij}$ with 
$0 \leq i,j\leq k$. These elements $E_{ij}$ can then either be seen as matrices in $\mC^{(k+1) \times (k+1)}$, or as (skew) Euler operators in $(k+1)$ vector 
variables. The former relies on the identification $(E_{ij})_{kl} = \delta_{ik}\delta_{jl}$, the latter on $E_{ii} \mapsto \mE_i + \frac{m}{2}$ and 
$E_{ij} \mapsto \langle u_i, \partial_j \rangle$, for all $i,j = 0, \ldots, k$ and $i\neq j$ (recall that $x \equiv u_0$). The finite-dimensional irreducible 
representations for this Lie algebra are in one-to-one correspondance with $(k+1)$-tuples $\mu=(\mu_0,\mu_1, \ldots, \mu_k)\in \mC^{k+1}$ 
for which $\mu_i-\mu_{i+1} \in \mZ^+$, for all $0 \leq i \leq k$; this is then called the highest weight (HW) of the corresponding representation $\mV(\mu)$. 
The corresponding highest weight vector (HWV) is a unique element $v_\mu\in \mV(\mu)$ for which the relations 
$E_{ii}v_{\mu} = \left(\mu_i+\frac{m}{2}\right) v_{\mu}$ and $E_{ij}v_{\mu} = 0$ for $i<j$ hold. 
Now, observe that every element $H(u_0,u_1,\ldots, u_k)\in \mcH_{\mu}\brac{\mR^{(k+1)m},\mC}$ satisfies 
the conditions for a HWV, which means that every such element generates a $\gl(k+1)$-module under the action of the negative root vectors, 
which we will denote as
\begin{equation*}
\mV(\mu_0, \ldots, \mu_k)^\ast \cong \left(\mu_0 + \frac{m}{2}, \cdots, \mu_k + \frac{m}{2}\right).
\end{equation*}
The upper index $^\ast$ is a shorthand notation for the shift of the HW over half the dimension, and will frequently be used below. Our strategy for the remainder of this section is based on the following observations:
\begin{enumerate}[label=(\roman*)]
\item First of all, we note that the space $\ker_{\lambda_0}^A\mcD_\lambda$ can be seen as a subspace of the space of Howe harmonics. Indeed, it suffices to pick up the elements of the appropriate degree of homogeneity, and to intersect this space with the kernel of the operators $\langle u_i, \partial_j\rangle$ for which $0 < i < j \leq k$ (note that $x \equiv u_0$ is excluded here). This ensures that the corresponding Howe harmonics take values in the space $\mcH_\lambda(\mR^{km},\mC)$. 
\item The space of Howe harmonics is built from the modules $\mV(\mu)^\ast$ introduced above, where one has to sum over the degrees of homogeneity (see below). This means that studying the intersection 
\begin{equation*}
\mcH^h(\mR^{(k+1)m},\mC) \cap \ker\big(\langle u_i, \partial_j\rangle : 0 < i < j \leq k\big)
\end{equation*}
is reduced to studying the intersection of certain $\gl(k+1)$-representations with the kernel of these operators $\langle u_i, \partial_j\rangle$. Using the fact that these skew Euler operators can be seen as the positive root vectors of the subalgebra $\gl(k) \subset \gl(k+1)$ reduces our problem to a problem from representation theory, which has been studied in the setting of multiplicities of weight spaces and transvector algebras (see below). 
\end{enumerate}
First of all, we note the following: 
\begin{lemma}
Each element $E_{ij}$ of the algebra $\gl(k+1)$ acts as an endomorphism on the (total) space of Howe harmonics in $(k+1)$ variables.  
\end{lemma}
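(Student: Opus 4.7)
The plan is to verify the claim by a direct commutator calculation. By definition, $\mcH^h(\mR^{(k+1)m},\mC)$ is the joint kernel of all constant-coefficient operators $\langle \D_a, \D_b\rangle$ with $0 \leq a, b \leq k$, where I identify $x \equiv u_0$ so that $\Delta_x = \langle \D_0, \D_0\rangle$ and $\langle \D_i, \D_x\rangle = \langle \D_i, \D_0\rangle$. To prove that $E_{ij}$ preserves this space, it suffices to show that for every Howe harmonic $P$ and every pair $(a,b)$, one has
$$\langle \D_a, \D_b\rangle\, E_{ij}\, P = 0,$$
which in turn will follow from rewriting $\langle \D_a, \D_b\rangle E_{ij} = E_{ij}\langle \D_a, \D_b\rangle + [\langle \D_a, \D_b\rangle, E_{ij}]$ and arguing that both terms annihilate $P$.

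The central step is therefore computing $[\langle \D_a, \D_b\rangle, E_{ij}]$. For $i \neq j$, where $E_{ij}$ is realised as $\langle u_i, \D_j\rangle$, a short calculation using only the elementary bracket $[\D_{a,l}, u_{i,m}] = \delta_{a,i}\delta_{l,m}$ and the fact that all $\D$'s commute among themselves gives
$$[\langle \D_a, \D_b\rangle,\, \langle u_i, \D_j\rangle] = \delta_{a,i}\langle \D_b, \D_j\rangle + \delta_{b,i}\langle \D_a, \D_j\rangle.$$
For the diagonal case $i = j$, where $E_{ii}$ acts as $\mE_i + \tfrac{m}{2}$, the analogous bracket produces at most a scalar multiple of $\langle \D_a, \D_b\rangle$ itself (zero if $i \notin \{a,b\}$, one copy if exactly one index matches, two copies if $a=b=i$). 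In every case the commutator lies in the left ideal generated by the defining operators $\langle \D_c, \D_d\rangle$, and thus annihilates $P$; combined with $\langle \D_a, \D_b\rangle P = 0$, the desired identity follows.

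There is no deep obstacle to this verification; conceptually, the lemma is the statement that the span of the negative root vectors $\langle \D_a, \D_b\rangle$ (the symmetric square of the standard representation of $\gl(k+1)$) is stable under the adjoint action of $\gl(k+1)$, which is a structural feature of the Howe dual pair $\gl(k+1) \subset \sym(2k+2)$. The only points requiring mild care are to treat the Laplacian case $a=b$ on the same footing as the mixed contractions, and to note separately that the diagonal generators $E_{ii}$ preserve each homogeneous component, so the full action of $\gl(k+1)$ on the space of Howe harmonics is well defined.
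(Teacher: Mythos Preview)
Your proof is correct and follows essentially the same approach as the paper: the paper's proof is a one-line remark that the lemma ``amounts to saying that the commutator action of the algebra $\gl(k+1) \subset \sym(2k+2)$ on the space $\Span(\langle \partial_i, \partial_j\rangle : 0 \leq i, j \leq k)$ is well-defined,'' which is precisely the adjoint-stability statement you verify explicitly via the bracket formula $[\langle \D_a, \D_b\rangle,\, \langle u_i, \D_j\rangle] = \delta_{a,i}\langle \D_b, \D_j\rangle + \delta_{b,i}\langle \D_a, \D_j\rangle$. Your version simply unpacks the computation the paper leaves to the reader.
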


\begin{proof}
This can easily be checked using the definition of Howe harmonics. It amounts to saying that the commutator action of the algebra $\gl(k+1) \subset \sym(2k+2)$ on the space span$(\langle \partial_i, \partial_j\rangle : 0 \leq i, i \leq k)$ is well-defined.
\end{proof}
In other words, if $H(u_0,u_1,\ldots,u_k)$ is a Howe harmonic, then each $\mC$-valued polynomial of the form
\begin{equation} \label{form}
\left[\sum_{(p_{ij})}\bigg(\prod_{i,j}E_{ij}^{p_{ij}}\bigg)\right]H(u_0,u_1,\cdots,u_k), \qquad p_{ij} \in \mN 
\end{equation}
is still a Howe harmonic in $(k+1)$ vector variables. The factor between brackets denotes an arbitrary `word' (the letters do not necessarily commute here) in the (skew) Euler operators generating $\gl(k+1)$. This can also be formulated in the following way. 
\begin{lemma}
The elements of the universal enveloping algebra $\mcU\big(\gl(k+1)\big)$ preserve the space of Howe harmonics in $(k+1)$ vector variables. 
\end{lemma}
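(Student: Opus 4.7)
The proof is essentially a bootstrap from the previous lemma. By definition, $\mcU(\gl(k+1))$ is the unital associative algebra generated by the symbols $E_{ij}$ with $0 \leq i,j \leq k$, modulo the commutator relations of $\gl(k+1)$. Hence every element is a finite $\mC$-linear combination of non-commuting monomials of the form $\prod_{i,j} E_{ij}^{p_{ij}}$, precisely as in equation (\ref{form}) just above.

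The plan is then to argue in two steps. First I would invoke the previous lemma, which asserts that every generator $E_{ij}$ acts as an endomorphism of $\mcH^h(\mR^{(k+1)m},\mC)$. Because a composition of endomorphisms of a fixed vector space is again an endomorphism of that space, any monomial $\prod_{i,j} E_{ij}^{p_{ij}}$ preserves the space of Howe harmonics; finite $\mC$-linear combinations of such monomials preserve it as well, giving the claim for arbitrary elements of the free algebra on the $E_{ij}$.

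The second, essentially bookkeeping, step is well-definedness: an element of $\mcU(\gl(k+1))$ is an equivalence class modulo the defining relations $E_{ij}E_{k\ell}-E_{k\ell}E_{ij} = [E_{ij},E_{k\ell}]$, so one must check that two representatives act identically on $\mcH^h(\mR^{(k+1)m},\mC)$. This is automatic once one observes that the polynomial realisation $E_{ii}\mapsto \mE_i+\tfrac{m}{2}$ and $E_{ij}\mapsto \inner{u_i,\D_j}$ satisfies the $\gl(k+1)$ commutation relations as honest identities between differential operators on the full polynomial space $\mcP(\mR^{(k+1)m},\mC)$, and therefore also on the invariant subspace of Howe harmonics. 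Consequently the action descends to a homomorphism $\mcU(\gl(k+1)) \to \End\bigl(\mcH^h(\mR^{(k+1)m},\mC)\bigr)$, which is exactly the content of the lemma. There is no genuine obstacle here; the statement is the standard extension of a Lie-algebra representation to its universal enveloping algebra, with the only input being the preceding lemma.
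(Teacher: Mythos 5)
Your proposal is correct and matches the paper's reasoning: the lemma is presented there as an immediate reformulation of the preceding lemma, since any word $\prod_{i,j}E_{ij}^{p_{ij}}$ in the generators (and hence any linear combination thereof) preserves the space of Howe harmonics by composition of endomorphisms. Your extra remark on well-definedness of the $\mcU(\gl(k+1))$-action is a standard point the paper leaves implicit, and does not change the argument.
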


Moreover, and this is a crucial observation, no other elements in $\mcU(\sym(2k+2))$ have this property. To explain what this means, remember that the operators 
$\langle u_i, u_j \rangle, \langle u_i, \partial_j \rangle$ and $\langle \partial_i, \partial_j \rangle$ $(0 \leq i, j \leq k)$ generate a model for the Lie algebra 
$\sym(2k+2)$. When decomposing polynomial vector spaces in $(k+1)$ vector variables in terms of irreducible modules for the orthogonal group, one needs two pieces of 
information: highest weights, referring to {\em which} summands to include, and the so-called embedding factors, referring to {\em how} to include these 
summands. They correspond to products of elements in the algebra 
$\sym(2k+2)$, i.e. elements in the algebra $\mcU\big(\sym(2k+2)\big)$. It is known from the PBW-theorem that we can always rearrange these products 
according to a chosen ordering. We therefore select the following: 
\begin{enumerate}[label=(\roman*)]
	\item first, all combinations $\langle u_i,u_j \rangle$ involving vector variables only are listed
	\item then, all elements in $\gl(k+1)$ are listed (combinations of a vector variable and a derivative)
	\item finally, all combinations $\langle \partial_i,\partial_j \rangle$ involving pure differential operators only are listed. 
\end{enumerate}
It follows that the only elements in $\mcU(\sym(2k+2))$ which can be used as embedding factors are elements in $\mcU(\gl(k+1))$. Indeed: combinations involving type (iii) will always act trivially on the space of simplicial harmonics, whereas combinations involving type (i) will always belong to the Fischer complement of the space of Howe harmonics. The latter statement is based on the fact that 
\begin{equation*}
\mcP(\mR^{(k+1)m},\mC) = \mcH^h(\mR^{(k+1)m},\mC) \oplus \bigg( \sum_{i \leq j} \langle u_i, u_j \rangle \mcP(\mR^{(k+1)m},\mC) \bigg)
\end{equation*}
the sum between brackets obviously not being direct. This direct sum is orthogonal with respect to the Fischer inner product, see Section 3. Suppose that we now 
choose $(k+1)$ positive integers $\lambda_0 \geq \lambda_1 \geq \ldots \geq \lambda_k$, where $\lambda_j$ stands for the degree of homogeneity in the 
variable $u_j$. If we now want to decompose the vector space $\mcH^h_{\lambda_0,\lambda}(\mR^{(k+1)m},\mC)$ into irreducible representations for $\so(m)$, 
it suffices to select from each of the modules $\mV(\mu_0,\mu_1,\ldots,\mu_k)^*$ for $\gl(k+1)$ generated by the elements in 
$\mcH_{\mu_0,\ldots,\mu_k}(\mR^{(k+1)m},\mC)$ all the weight spaces having the correct degree of homogeneity .
\begin{example}
Despite the fact that the case $k = 1$ is rather trivial, it still is useful to illustrate the procedure described above. Suppose that we want to decompose the vector 
space $\mcH^h_{\lambda_0,\lambda_1}(\mR^{2m},\mC)$, $\lambda_0 \geq \lambda_1$. 
We then need to consider the $\gl(2, \mC)$-modules generated by the spaces $\mcH_{p,q}(\mR^{2m},\mC)$, with $p \geq q$. These are given by
\begin{equation*}
\mV(p,q)^* = \mcH_{p,q} \oplus \langle u_2,\partial_1\rangle \mcH_{p,q} \oplus \cdots \oplus \langle u_2,\partial_1\rangle^{p-q} \mcH_{p,q}
\end{equation*}
where it is easily verified that only a limited number of these modules will contribute to the space $\mcH^h_{\lambda_0,\lambda_1}$. Selecting the ones showing the correct degree of homogeneity, we thus indeed have that
\begin{equation*}
\mcH^h_{\lambda_0,\lambda_1} = \mcH_{\lambda_0,\lambda_1} \oplus \langle u_2,\partial_1\rangle \mcH_{\lambda_0+1,\lambda_1-1} \oplus \cdots \oplus \langle u_2,\partial_1\rangle^{\lambda_1} \mcH_{\lambda_0+\lambda_1,0}
\end{equation*}
\end{example}

In the general case, the procedure becomes more complicated since the weight spaces in arbitrary $\gl(k+1)$-modules occur with higher multiplicity, which means that also the decomposition for $\mcH^h_{\lambda_0,\lambda}(\mR^{(k+1)m},\mC)$ will no longer be multiplicity-free. \\
Another crucial difference between the case $k = 1$ (see the example above) and the cases $k > 1$ is the following: whereas each of the weight spaces
 $\langle u_2,\partial_1\rangle^{j} \mcH_{p,q} \subset \mV(p,q)^*$ is harmonic in $u_1$, and therefore occurs as a subspace of the space of type A solutions for 
an appropriate operator $\mcD_\lambda$ with $\lambda = (\lambda_1,0,\ldots,0)$, this is no longer true for $k > 1$. The problem is that the weight spaces of 
the module $\mV(\mu_0,\ldots,\mu_k)^*$ do not necessarily satisfy the equations defining $\mcH_\lambda(\mR^{km},\mC)$. Again switching to the 
general notations from above, we can formulate this algebraic problem as follows: \\ 
We first need to intersect each representation $\mV(\mu_0,\ldots,\mu_k)^*$ with the kernel of the operators $E_{ij}$ $(1 \leq i < j \leq k)$, where the index $0$ is 
again excluded. The desired polynomials should thus satisfy the condition to be a HWV for the algebra $\gl(k)$. To this end, we define the subspace 
$\mV(\mu)^+$ of $\mV(\mu)^* \equiv \mV(\mu_0,\ldots,\mu_k)^*$, containing all HWV of the subalgebra $\gl(k) \subset \gl({k+1})$:
\begin{equation*}
\mV(\mu)^+ = \{\eta \in \mV(\mu)^*: E_{ij} \eta = 0, \; 1 \leq i < j \leq k\}.
\end{equation*}
Moreover, we introduce a notation for the set of weight spaces in $\mV(\mu)^*$ realising a copy of the $\gl(k)$-module with highest weight 
$\nu^* = (\nu_1,\ldots,\nu_k)^* = \left(\nu_1+\frac{m}{2},\ldots,\nu_k+\frac{m}{2}\right)$. This means that for each of the elements in the previous set, a 
subscript $\nu$ is added referring to the $\gl(k)$-module for which it actually defines a HWV: 
\begin{equation*}
\mV(\mu)^+_\nu = \{\eta \in \mV(\mu)^+: E_{ii}\eta = \left(\nu_{i}+\frac{m}{2}\right) \eta, \; 1\leq i \leq k\}.
\end{equation*}
As $\mV(\mu)^*$ is generated by the operators $E_{ij}$ acting on elements $H\in \mcH_{\mu}$, each element $\eta \in \mV(\mu)^+_\nu$ is to be seen as a 
particular element of the form (\ref{form}), 
with $h(x,u_1,\ldots,u_k) \in \mcH_{\mu}$. Recall that the dimension of the spaces $\mV(\mu)^+_\nu$ either is 
$0$ or $1$, with
\begin{equation*}
\dim \left(\mV(\mu)^+_\nu \right) = 1 \Leftrightarrow \mu_{i-1}-\nu_i \in \mZ^+ \mbox{\ and\ } \nu_i - \mu_i \in \mZ^+, \; \mbox{for all\ } i = 1, \ldots, k
\end{equation*}
which is called the {\em betweenness condition}, as it can be represented graphically --at least for integer values of $\mu_i$ or integer values shifted over half the dimension-- by
\begin{equation*}
\mu_0 \geq \nu_1 \geq \mu_1 \geq \nu_2 \geq \mu _2 \geq \cdots \geq \mu_{k-1} \geq \nu_k \geq \mu_k.
\end{equation*}
We can gather these findings in the following result.
\begin{proposition} \label{decomp_Ms}
For each vector space $\mcH_{\mu}$, the only summands inside the representation $\mV(\mu)^*$ of $\gl({k+1})$ contributing to the space of type A solutions of the higher spin Laplace operator in $k$ dummy vector variables are of the form
\begin{equation*}
\rho_{d_1,\cdots,d_k}\mcH_{\mu}
\end{equation*}
where $\rho_{d_1,\cdots,d_k} \in \mcU\big(\gl({k+1})\big)$ is an embedding factor which is homogeneous of degree $(d_1,\cdots,d_k)$ in $(u_1,\cdots,u_k)$. Moreover, the integers $d_j$ satisfy the following conditions: 
\begin{equation*}
\mu_0 \geq \mu_1 + d_1 \geq \mu_1 \geq \mu_2 + d_2 \geq \cdots \geq \mu_{k-1} \geq \mu_k + d_k \geq \mu_k
\end{equation*}
or $0 \leq d_p \leq \mu_{p-1} - \mu_p$ (with $1 \leq p \leq k$). These conditions follow from the branching rules.
\end{proposition}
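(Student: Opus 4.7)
The plan is to read off the proposition directly from the $\gl(k+1)\downarrow\gl(k)$ branching machinery that has just been set up. By definition, a type A solution inside the Howe harmonics must in addition lie in the kernel of the operators $\langle u_i,\partial_j\rangle = E_{ij}$ for $1\leq i<j\leq k$, because this is exactly what forces the values to sit in $\mcH_\lambda(\mR^{km},\mC)$. Inside a fixed $\gl(k+1)$-module $\mV(\mu)^* \subset \mcH^h(\mR^{(k+1)m},\mC)$, this kernel is precisely the subspace $\mV(\mu)^+$ of $\gl(k)$-highest weight vectors introduced just above the proposition.

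Next, I would grade $\mV(\mu)^+$ by the $\gl(k)$-weight, i.e.\ decompose it as the direct sum of the subspaces $\mV(\mu)^+_\nu$ with $\nu=(\nu_1,\ldots,\nu_k)$. The key input is the dimension formula already cited: each $\mV(\mu)^+_\nu$ is at most one-dimensional, and is nonzero if and only if the betweenness (interlacing) condition
\begin{equation*}
\mu_0 \geq \nu_1 \geq \mu_1 \geq \nu_2 \geq \mu_2 \geq \cdots \geq \mu_{k-1}\geq \nu_k \geq \mu_k
\end{equation*}
holds. This is the classical $\gl(k+1)\downarrow \gl(k)$ branching rule, which was the whole reason for introducing the spaces $\mV(\mu)^+_\nu$ via the Gelfand--Tsetlin picture. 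Since $\mV(\mu)^*$ was generated from $\mcH_\mu$ by applying negative root vectors $E_{ij}$ ($i>j$) of $\gl(k+1)$, the unique (up to scalar) generator of $\mV(\mu)^+_\nu$ can be written as $\rho\cdot h$ with $h\in \mcH_\mu$ and $\rho$ an element of $\mcU(\gl(k+1))$ — and, as explained in the paragraphs preceding the statement (the PBW argument ordering $\sym(2k+2)$-generators into vector-type, mixed-type, derivative-type), no other choice for the embedding factor is possible without either annihilating $\mcH_\mu$ or leaving the space of Howe harmonics.

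Finally, I would translate weights into degrees. Since $E_{jj}$ acts on a $u_j$-homogeneous polynomial of degree $\delta_j$ with eigenvalue $\delta_j+\tfrac{m}{2}$, the $\gl(k)$-weight $\nu_j$ of the vector $\rho\cdot h$ equals $\mu_j + d_j$, where $d_j$ is the degree in $u_j$ of the embedding factor (note $\rho$ raises $u_j$-degree by $d_j$ while it does not change the underlying highest weight $\mu$ of $h$). Substituting $\nu_j = \mu_j + d_j$ into the betweenness chain immediately rewrites it as
\begin{equation*}
\mu_0 \geq \mu_1+d_1 \geq \mu_1 \geq \mu_2+d_2 \geq \cdots \geq \mu_{k-1}\geq \mu_k+d_k \geq \mu_k,
\end{equation*}
equivalently $0\leq d_p \leq \mu_{p-1}-\mu_p$ for $1\leq p \leq k$. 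This gives exactly the statement of the proposition.

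The main obstacle is not computational but conceptual: one has to be confident that the embedding factor really can (and must) be taken in $\mcU(\gl(k+1))$ rather than in the larger $\mcU(\sym(2k+2))$. This is the subtlest point and relies on the PBW-ordering argument already made: vector-type products $\langle u_i,u_j\rangle$ land in the Fischer complement of the Howe harmonics, while derivative-type products $\langle\partial_i,\partial_j\rangle$ kill $\mcH_\mu$, leaving only the $\gl(k+1)$-part as a genuine embedding. Once this is granted, the rest of the proof is a direct translation of the classical branching rule into degree inequalities.
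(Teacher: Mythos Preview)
Your proposal is correct and follows essentially the same line as the paper. In fact, the paper does not give a separate proof for this proposition at all: it is stated as a summary (``We can gather these findings in the following result'') of the preceding discussion, namely the PBW argument restricting embedding factors to $\mcU(\gl(k+1))$, the identification of the relevant subspace with $\mV(\mu)^+$, and the betweenness condition for $\mV(\mu)^+_\nu$; your write-up simply makes this reasoning explicit and adds the translation $\nu_j=\mu_j+d_j$ from weights to homogeneity degrees.
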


In the next section, an explicit form for these embedding factors $\rho_{d_1,\cdots,d_k}$ is obtained, using results on raising and lowering operators in 
transvector algebras. Note that these factors will be unique up to a constant, which follows from the fact that the branching from $\gl({k+1})$ to $\gl(k)$ is 
multiplicity-free. Let us now formulate the main conclusion of this section.
\begin{theorem}
As a module for the orthogonal group, the space $\ker_{\lambda_0}^A \mcD_{\lambda}$, with $\lambda_0 \geq \lambda_1$ decomposes into the following irreducible summands: 
\begin{equation*}
\ker_{\lambda_0}^A \mcD_{\lambda} = \bigoplus_{(d_1,\cdots,d_k)}\rho_{d_1,\ldots,d_k}\mcH_{\mu_0,\ldots,\mu_k}
\end{equation*}
where $(\lambda_0,\ldots,\lambda_k)$ is a dominant weight satisfying 
\begin{equation*}
\big(\mu_0,\mu_1,\ldots,\mu_k\big) = (\lambda_0 + \sum_{i=1}^k d_i,\lambda_1 - d_1, \ldots,\lambda_k - d_k)
\end{equation*}
with $\lambda_i - \lambda_{i+1} \geq d_i \geq 0$ for $1 \leq i \leq k-1$ and $0 \leq d_k \leq \lambda_k$. 
\end{theorem}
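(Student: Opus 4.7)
The plan is to realise $\ker_{\lambda_0}^A\mcD_\lambda$ as a subspace of Howe harmonics in $(k+1)$ vector variables (using the convention $x\equiv u_0$) and then to read off its $\SO(m)$-decomposition from the $\gl(k+1)\downarrow\gl(k)$ branching data encoded in Proposition \ref{decomp_Ms}. First, I would observe that by Definition \ref{TypeA}, an element $F\in\ker_{\lambda_0}^A\mcD_\lambda$ is annihilated by $\Delta_x$, by $\inner{\D_i,\D_x}$ and, because it takes values in $\mcH_\lambda(\mR^{km},\mC)$, also by $\Delta_i$, $\inner{\D_i,\D_j}$ and $\inner{u_i,\D_j}$ for $1\leq i<j\leq k$. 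The first three sets of conditions are precisely those cutting out $\mcH^h(\mR^{(k+1)m},\mC)$, so
\begin{equation*}
\ker_{\lambda_0}^A\mcD_\lambda = \mcH^h_{\lambda_0,\lambda_1,\ldots,\lambda_k}\brac{\mR^{(k+1)m},\mC}\cap\ker\brac{E_{ij}:1\leq i<j\leq k}.
\end{equation*}

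Next, I would invoke the decomposition of the homogeneous space of Howe harmonics into $\gl(k+1)$-modules $\mV(\mu)^\ast$ generated from $\mcH_\mu(\mR^{(k+1)m},\mC)$. Imposing the vanishing of the operators $E_{ij}$ for $1\leq i<j\leq k$ amounts to selecting the subspace $\mV(\mu)^+$ of $\gl(k)$-highest weight vectors, graded by the $\gl(k)$-weight $\nu^\ast$. By Proposition \ref{decomp_Ms} (which is just the multiplicity-free branching $\gl(k+1)\downarrow\gl(k)$), each nonzero component is a one-dimensional line $\rho_{d_1,\ldots,d_k}\mcH_\mu$, where $\rho_{d_1,\ldots,d_k}\in\mcU(\gl(k+1))$ is homogeneous of degree $(d_1,\ldots,d_k)$ in $(u_1,\ldots,u_k)$, and the betweenness condition reads $\mu_0\geq \nu_1\geq \mu_1\geq \nu_2\geq\cdots\geq\nu_k\geq\mu_k$.

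To match this abstract description with the statement of the theorem, I would use a degree bookkeeping argument. Since every $E_{ij}$ preserves the total degree in $(u_0,\ldots,u_k)$, acting by $\rho_{d_1,\ldots,d_k}$ raises the degree in $u_i$ by $d_i$ for $i\geq 1$ and lowers the degree in $u_0=x$ by $\sum_{i=1}^k d_i$. Requiring the final multi-degree to equal $(\lambda_0,\lambda_1,\ldots,\lambda_k)$ therefore forces $\mu_i=\lambda_i-d_i$ for $1\leq i\leq k$ and $\mu_0=\lambda_0+\sum_{i=1}^k d_i$, while the $\gl(k)$-weight of the selected vector is $\nu_i=\lambda_i$. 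Substituting these equalities into the betweenness chain $\mu_0\geq\lambda_1\geq\mu_1\geq\lambda_2\geq\cdots\geq\lambda_k\geq\mu_k$ reduces to $d_i\geq 0$ and $d_i\leq\lambda_i-\lambda_{i+1}$ for $1\leq i\leq k-1$, together with $0\leq d_k\leq\lambda_k$ (the upper bound coming from the dominance $\mu_k\geq 0$ required of a simplicial-harmonic weight). The top condition $\mu_0\geq\lambda_1$ is automatic from $\lambda_0\geq\lambda_1$ and $d_i\geq 0$.

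The only subtle point I would need to be careful about is the direction of the argument: Proposition \ref{decomp_Ms} already gives the inclusion $\bigoplus \rho_{d_1,\ldots,d_k}\mcH_{\mu_0,\ldots,\mu_k}\subset\ker_{\lambda_0}^A\mcD_\lambda$ once the indices are rewritten, but the reverse inclusion requires noting that any element of the left-hand side decomposes along the $\gl(k+1)$-module decomposition of Howe harmonics, that each component must separately satisfy the $\gl(k)$-highest weight conditions, and that the degree constraints eliminate all but the listed summands. The hardest part of the argument is therefore really the bookkeeping, namely verifying that the multiplicity-free branching together with the degree constraint forces a single line per admissible $(d_1,\ldots,d_k)$, and that the resulting list exhausts the full kernel.
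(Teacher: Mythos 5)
Your argument is correct and follows essentially the same route as the paper: both realise $\ker_{\lambda_0}^A\mcD_\lambda$ inside the Howe harmonics, reduce to the multiplicity-free $\gl(k+1)\downarrow\gl(k)$ branching via Proposition \ref{decomp_Ms}, and then invert the degree bookkeeping (leading term $E_{10}^{d_1}\cdots E_{k0}^{d_k}$ lowering the degree in $x$ and raising it in the $u_i$) to turn the betweenness conditions on $\mu$ into the stated inequalities on the $d_i$. Your explicit remark about the reverse inclusion makes the exhaustiveness slightly more visible than in the paper's version, but it is the same proof.
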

\begin{proof}
First, it follows from the branching rules for $\gl({k+1})$ to $\gl(k)$ that no embedding factor $\rho_{d_1,\cdots,d_k}$ can have a net effect of the form 
$(\pm p,\mp p)$ on the homogeneity degree in two variables $(u_i,u_j)$, with $i, j \geq 1$ and $p \in \mN$. Indeed: 
\begin{equation*}
(\mu_1,\ldots,\mu_k)^* \subset (\mu_0,\ldots,\mu_k)^*\bigg\vert^{\gl({k+1})}_{\gl(k)}
\end{equation*}
and any other summand which comes from the branching is obtained by adding positive integers $d_1, \ldots, d_k$ to resp. $\mu_1, \ldots, \mu_k$. 
This implies that the net effect of the factor $\rho_{d_1,\ldots,d_k}$ can always be represented as a leading term of the form 
\begin{equation}
\label{leadingterm}
\rho_{d_1,\ldots,d_k} = E_{21}^{d_1}\ldots E_{(k+1)1}^{d_k} + \ldots,
\end{equation}
where the numbers $(d_1,\ldots,d_k)$ satisfy the betweenness conditions coming from the branching. 
If we then fix the numbers $(\lambda_0,\ldots,\lambda_k)$, it suffices to find all the $(k+1)$-tuples $(\mu_0,\ldots,\mu_k)$ for which there exist 
positive integers $d_j$ such that we have an inclusion 
\begin{equation*}
\rho_{d_1,\ldots,d_k}\mcH_{\mu_0,\ldots,\mu_k} \subset \ker_{\lambda_0}^A \mcD_{\lambda}.
\end{equation*}
This is only possible if the conditions 
\begin{equation*}
\big(\mu_0 - \sum_{i=1}^k d_i,\mu_1 + d_1,\ldots,\mu_k + d_k\big) = (\lambda_0,\ldots,\lambda_k) 
\end{equation*}
on the degrees of homogeneity are satisfied, and if moreover
\begin{equation*}
\left\{
			\begin{array}{ccccc}
				\mu_0 - \mu_1 & \geq & d_1 & \geq & 0\\
				\mu_1 - \mu_2 & \geq & d_2 & \geq & 0\\
				& \vdots & \\
				\mu_{k-1} - \mu_k & \geq & d_k & \geq & 0.
			\end{array}
		\right. 
\end{equation*}
These are the conditions coming from the branching rules. Using the restrictions on the homogeneity, this can also be rewritten as 
$\lambda_i - \lambda_{i+1} \geq d_i \geq 0$, for all $1 \leq i < k$, and $\mu_k = \lambda_k - d_k$. This last expression tells us that 
$0 \leq d_k \leq \lambda_k$.
\end{proof}

\section{Relation with transvector algebras}

The aim of this section is to obtain explicit expressions for the embedding factors $\rho_{d_1,\cdots,d_k}$, i.e. the elements in $\mcU\big(\gl({k+1})\big)$
realising the decomposition of the space $\ker_{\lambda_0}^A \mcD_{\lambda}$ into irreducible summands under the orthogonal group.
Let us therefore take a look at the Mickelsson algebra $S(\gl(k+1), \gl(k))$, constructed as explained in \cite{Molev}. The generators of this algebra are given 
by the elements $z_{i0}$ and $z_{0i}$, $i = 1, \ldots, k$:
\begin{align*}
	z_{i0} &= \sum_{i>i_1>\cdots>i_s\geq 1}E_{ii_1}E_{i_1i_2}\cdots E_{i_{s-1}i_s}E_{i_s 0}(h_i-h_{j_1})\cdots (h_i-h_{j_r})\\
	z_{0i} &= \sum_{i<i_1<\cdots<i_s\leq k} E_{i_1 i}E_{i_2i_1}\cdots E_{i_{s}i_{s-1}}E_{0 i_s}(h_i-h_{j_1})\cdots (h_i-h_{j_r}).
\end{align*}
In these definitions, $s$ runs over nonnegative integers, $h_i := E_{ii}-i$ and $\{j_1, \ldots, j_r\}$ is the complementary subset to $\{i_1, \ldots, i_s\}$ in the set $\{1, \ldots, i-1\}$ or $\{i+1, \ldots, k\}$. For example, when $k = 3$ we have that
\begin{equation*}
z_{30} = E_{30}(h_3-h_1)(h_3-h_2) + E_{32}E_{20}(h_3-h_1) + E_{31}E_{10}(h_3-h_2) + E_{32}E_{21}E_{10}.
\end{equation*}
The properties of the extremal projector then lead to the following lemma, which was proved in \cite{Molev}.
\begin{lemma} \label{lemma4}
Let $\eta \in \mV(\mu)_\nu^+$, $\nu = (\nu_1,\ldots,\nu_k)$. Then, for any $i = 1, \ldots, k$, we have
\begin{equation*}
z_{i0} \eta \in \mV(\mu)_{\nu+\varepsilon_i}^+, \qquad z_{0i} \eta \in \mV(\mu)_{\nu-\varepsilon_i}^+,
\end{equation*} 
where the weight $\nu\pm \varepsilon_i$ is obtained from $\nu$ by replacing $\nu_{i}$ by $\nu_{i} \pm 1$.
\end{lemma}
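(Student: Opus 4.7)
The plan is to exploit the defining property of the Mickelsson algebra $S(\gl(k+1),\gl(k))$, namely that its generators $z_{i0}, z_{0i}$ normalise the left ideal $J = \mcU'(\gl(k+1))\gl(k)^+$. Once this is granted, the two conclusions of the lemma follow almost immediately; the real work is split between checking the correct weight shift and the preservation of the highest-weight-vector condition for $\gl(k)$.

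First I would establish the weight shift. Each summand of $z_{i0}$ has the form $E_{ii_1} E_{i_1 i_2}\cdots E_{i_{s-1}i_s} E_{i_s 0}$ multiplied on the right by a polynomial in the Cartan generators $h_j$. The Cartan polynomial acts as a scalar on any weight vector and so does not alter the weight. Using $[E_{cc},E_{ab}] = (\delta_{ca}-\delta_{cb})E_{ab}$, the chain $E_{ii_1}E_{i_1i_2}\cdots E_{i_s0}$ yields a cumulative weight shift of $\varepsilon_i - \varepsilon_0$, the intermediate indices $i_1,\ldots,i_{s-1}$ telescoping out. Restricting to the $\gl(k)$-Cartan (i.e.\ dropping the $E_{00}$-component) produces $+\varepsilon_i$, so that $z_{i0}\eta$ has $\gl(k)$-weight $\nu+\varepsilon_i$. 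The dual calculation for $z_{0i}$ gives $\nu-\varepsilon_i$.

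Next I would verify $E_{pq}z_{i0}\eta = 0$ for $1\leq p<q\leq k$. The cleanest route is to use that $z_{i0}$, by the very construction of the Mickelsson algebra, represents an element of $\operatorname{Norm}(J)/J$. The normaliser condition $J z_{i0}\subset J$ gives, for any $E_{pq}\in \gl(k)^+$, a decomposition $E_{pq}z_{i0} = \sum_\alpha u_\alpha X_\alpha$ with $u_\alpha\in \mcU'(\gl(k+1))$ and $X_\alpha\in \gl(k)^+$. Since $\eta\in \mV(\mu)_\nu^+$ is annihilated by every element of $\gl(k)^+$, applying this to $\eta$ gives $E_{pq}z_{i0}\eta=0$, placing $z_{i0}\eta$ in $\mV(\mu)^+$. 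Combined with the weight computation, this yields $z_{i0}\eta\in \mV(\mu)_{\nu+\varepsilon_i}^+$; the argument for $z_{0i}$ is entirely parallel.

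The main obstacle is the verification of the normaliser property for the specific formula given for $z_{i0}$: one has to show that the sum over strictly decreasing chains $i>i_1>\cdots>i_s\geq 1$, weighted by the prescribed Cartan polynomials $(h_i-h_{j_1})\cdots(h_i-h_{j_r})$, really does land in $\operatorname{Norm}(J)$. I would approach this by computing $[E_{pq},z_{i0}]$ term by term: commuting $E_{pq}$ past each chain produces shorter or longer chains via the standard bracket, and most of the resulting terms begin with a factor in $\gl(k)^+$ and therefore automatically belong to $J$. The remaining contributions, which would otherwise obstruct the normaliser condition, are designed to cancel in pairs against those coming from the neighbouring length-$(s\pm 1)$ chains, and the specific Cartan polynomial coefficients are chosen precisely to enforce this cancellation. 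An induction on the chain length $s$, following the treatment in \cite{Molev}, is the natural way to make this bookkeeping explicit.
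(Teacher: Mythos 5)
Your proposal is correct, and it is worth noting that the paper itself offers no proof of this lemma: it is quoted from \cite{Molev} with only the remark that it follows from ``the properties of the extremal projector,'' so the right comparison is with that intended route. Your weight computation is exactly right (the intermediate indices in each chain telescope, every summand of $z_{i0}$ shifts the $\gl(k+1)$-weight by $\varepsilon_i-\varepsilon_0$, and the Cartan polynomials act as scalars), and the deduction of $E_{pq}z_{i0}\eta=0$ from $E_{pq}z_{i0}\in J=\mcU'(\gl(k+1))\gl(k)^+$ together with $\gl(k)^+\eta=0$ is the standard argument. The one place where you leave genuine work --- verifying that the explicit chain formula for $z_{i0}$ lies in $\operatorname{Norm}(J)$ by a term-by-term commutator computation with induction on chain length --- can be short-circuited by the observation the paper records a few lines later: $s_{i0}=p_{\gl(k)}E_{i0}$, i.e.\ $z_{i0}$ is the extremal projector $\pi_{\gl(k)}$ applied to $E_{i0}$, multiplied by the Cartan polynomial $(h_i-h_{i-1})\cdots(h_i-h_1)$ to clear denominators. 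The defining identity $\s^+\pi_\s=0$ then gives $E_{pq}\,z_{i0}\eta=\bigl(E_{pq}\pi_{\gl(k)}\bigr)E_{i0}\eta\cdot c=0$ with no bookkeeping at all; your induction is essentially a hand-made re-derivation of this fact. Two small points you should still record: the coefficients $u_\alpha$ in your decomposition live in the localised algebra $\mcU'$, so the identity $E_{pq}z_{i0}=\sum_\alpha u_\alpha X_\alpha$ is only valid as an operator identity on modules where the denominators $h_i-h_j$ act invertibly --- here they act as $\nu_i-\nu_j+j-i\neq 0$ on the relevant weight spaces by the betweenness conditions; and the containment $z_{i0}\eta\in\mV(\mu)^*$ is automatic because $\mV(\mu)^*$ is a $\gl(k+1)$-submodule and $z_{i0}\in\mcU(\gl(k+1))$.
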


In the present setting of solutions of higher spin operators, the lemma can be reformulated as follows: the operators $z_{i0}$ and $z_{0i}$, $i=1,\ldots, k$ will map a type A solution of a higher spin Laplace operator to another type A solution (be it for another operator, since the degree of homogeneity will change). More explicitly, the following results hold.
\begin{corollary}
For every polynomial $P(x;U) \in  \ker_{\lambda_0}^A \mcD_{\lambda_1, \ldots, \lambda_k}$, we have
	\begin{align*}
		& z_{i0}P(x;u_1,\cdots,u_k) \in \ker_{\lambda_0-1}^A \mcD_{\lambda_1, \cdots, \lambda_{i-1}, \lambda_{i}+1, \lambda_{i+1}, \cdots, \lambda_k} \\
		& z_{0i}P(x;u_1,\cdots,u_k) \in \ker_{\lambda_0+1}^A \mcD_{\lambda_1, \cdots, \lambda_{i-1}, \lambda_{i}-1, \lambda_{i+1}, \cdots, \lambda_k}.
	\end{align*}
\end{corollary}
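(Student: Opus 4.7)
The plan is to reduce the statement directly to Lemma \ref{lemma4} by re-expressing the defining conditions of $\ker_{\lambda_0}^A\mcD_\lambda$ in the language of Howe harmonics and $\gl(k)$-highest weight vectors. Setting $x\equiv u_0$ as in the previous section, a polynomial $P\in\ker_{\lambda_0}^A\mcD_{\lambda_1,\ldots,\lambda_k}$ is precisely a polynomial which is (a) homogeneous of degree $\lambda_0$ in $u_0$ and of degree $\lambda_i$ in $u_i$ for $1\leq i\leq k$; (b) a Howe harmonic in $(k+1)$ variables, i.e.\ in the kernel of all the operators $\inner{\D_a,\D_b}$ with $0\leq a,b\leq k$; and (c) a highest weight vector for the subalgebra $\gl(k)\subset\gl(k+1)$, i.e.\ in the kernel of $E_{ab}=\inner{u_a,\D_b}$ for $1\leq a<b\leq k$ (this is the simplicial harmonicity condition in the dummies $u_1,\ldots,u_k$). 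The theorem at the end of the previous section then tells us that $P$ decomposes along the summands $\mV(\mu)^+_\nu$ with $\nu=(\lambda_1,\ldots,\lambda_k)$, for multi-indices $\mu$ satisfying the betweenness conditions.

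With this reformulation, the first part of the corollary is almost immediate. Applying Lemma \ref{lemma4} to each summand, the operator $z_{i0}$ sends $\mV(\mu)^+_\nu$ into $\mV(\mu)^+_{\nu+\varepsilon_i}$, so by linearity $z_{i0}P$ again lies in a direct sum of $\gl(k)$-HWV subspaces, now of weight $(\lambda_1,\ldots,\lambda_i+1,\ldots,\lambda_k)$; thus the simplicial harmonicity in $u_1,\ldots,u_k$ is preserved. Moreover, since $z_{i0}\in\mcU(\gl(k+1))$ and an earlier lemma established that $\mcU(\gl(k+1))$ acts by endomorphisms on the space of Howe harmonics, we also retain $\Delta_x(z_{i0}P)=0$ and $\inner{\D_j,\D_x}(z_{i0}P)=0$ for $1\leq j\leq k$. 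Finally, because every term of $z_{i0}$ is a product of skew Euler operators $E_{ab}$ ending in a single factor $E_{j0}=\inner{u_j,\D_x}$, its net effect on the $u_0$-degree is to decrease it by exactly one, while the $u_i$-degree is raised by one, as dictated by the weight shift. Altogether $z_{i0}P$ satisfies all the defining conditions of $\ker_{\lambda_0-1}^A\mcD_{\lambda_1,\ldots,\lambda_i+1,\ldots,\lambda_k}$.

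The argument for $z_{0i}$ is entirely symmetric, using the second half of Lemma \ref{lemma4}; each term of $z_{0i}$ instead begins with a single factor $E_{0j}=\inner{x,\D_j}$, so the $u_0$-degree is raised by one and the $u_i$-degree is lowered by one, producing an element of $\ker_{\lambda_0+1}^A\mcD_{\lambda_1,\ldots,\lambda_i-1,\ldots,\lambda_k}$.

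I do not anticipate any serious obstacle here: all the representation-theoretic content is already packaged inside Lemma \ref{lemma4}, and what remains is merely the translation from weight-space vocabulary to degree-of-homogeneity vocabulary. The only mild care needed is in verifying the overall shift in the $x$-degree, which amounts to the observation that each word $z_{i0}$ (resp.\ $z_{0i}$) in the generators $E_{ab}$ contains exactly one factor of the form $E_{\cdot 0}$ (resp.\ $E_{0\cdot}$), together with a reshuffling among the $E_{ab}$ with $1\leq a,b\leq k$ that leaves the total degree untouched.
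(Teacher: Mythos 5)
Your proposal is correct and follows essentially the route the paper intends: the paper presents this corollary as a direct reformulation of Lemma \ref{lemma4}, obtained by translating the defining conditions of $\ker_{\lambda_0}^A\mcD_\lambda$ (with $x\equiv u_0$) into the language of Howe harmonics and $\gl(k)$-highest weight vectors, exactly as you do. Your additional bookkeeping of the homogeneity shifts (one factor $E_{\cdot 0}$ or $E_{0\cdot}$ per word, the rest reshuffling degrees among the dummies) is the right way to pin down the change $\lambda_0\mapsto\lambda_0\mp 1$.
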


\begin{example} 
When $k=2$, we have that
\begin{equation}\label{raisinglowering}
\begin{split}
z_{10} &= E_{10} = \langle u_1, \partial_{x} \rangle \\
z_{20} &= E_{21}E_{10}+E_{20}(h_2-h_1) = \langle u_2, \partial_1 \rangle\langle u_1, \partial_{x} \rangle + \langle u_2, \partial_{x} \rangle(\mE_2 - \mE_1 -1) \\
z_{01} &= E_{21}E_{02}+E_{01}(h_1-h_2) = \langle u_2, \partial_1 \rangle\langle x, \partial_2 \rangle + \langle x, \partial_1 \rangle(\mE_1 - \mE_2 +1) \\
z_{02} &= E_{02} = \langle x, \partial_2 \rangle.
\end{split}		
\end{equation}
\end{example}

In view of Lemma \ref{lemma4}, $z_{20}$ raises the degree in $u_2$ by one. Reconsidering the space $\ker_{3}^A \mcD_{1,1}$, we can now write its direct sum decomposition in terms of the explicit embedding factors:
\begin{equation*}
\ker_{3}^A \mcD_{1,1} = \mcH_{3,1,1} \oplus (\langle u_2, \partial_1 \rangle\langle u_1, \partial_x \rangle + \langle u_2, \partial_x \rangle(\mE_2 - \mE_1 -1))\mcH_{4,1}.
\end{equation*}
The Euler operators will only produce multiplicative constants, since they act on homogeneous polynomials. In this way, we also see the aforementioned leading terms in the example, up to a multiplicative constant.
\begin{lemma} \label{basis}
Let $\nu$ satisfy the betweenness condition stated above, and let $v_{\mu}$ be the highest weight vector of the module $\mV(\mu)$. Then the elements
\begin{equation*}
v_{\mu}(\nu) := z_{10}^{d_1}\cdots z_{k0}^{d_k} \, v_{\mu}
\end{equation*} 
are nonzero, provided that $(d_1,\cdots,d_k)$ satisfies all conditions of Theorem 1. Moreover, the space $\mV(\mu)^+$ is spanned by these elements 
$v_\mu(\nu)$.
\end{lemma}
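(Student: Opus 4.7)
The plan is to break the proof into three parts: (i) verifying that $v_\mu(\nu)$ lies in the correct weight component $\mV(\mu)^+_\nu$; (ii) bounding $\dim\mV(\mu)^+_\nu$ from above via branching; and (iii) showing each $v_\mu(\nu)$ is non-zero. The first two steps are essentially corollaries of results already available, while (iii) will be the genuine obstacle.

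For (i), the highest weight vector $v_\mu$ sits in $\mV(\mu)^+_{(\mu_1,\ldots,\mu_k)}$: its $\gl(k)$-weight is $(\mu_1,\ldots,\mu_k)$ and it is annihilated by all $E_{ij}$ with $1\le i<j\le k$, since these are positive root vectors of the ambient algebra $\gl(k+1)$. Lemma \ref{lemma4} then applies iteratively: each factor $z_{j0}$ preserves the property of being a $\gl(k)$-HWV and raises the $j$th weight entry by one. After applying $z_{10}^{d_1}\cdots z_{k0}^{d_k}$ one thus lands in $\mV(\mu)^+_\nu$ with $\nu_j=\mu_j+d_j$, and the betweenness condition is exactly the inequality stated in Theorem 1.

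For (ii), invoke the classical multiplicity-free branching rule
\begin{equation*}
\mV(\mu)\Big|^{\gl(k+1)}_{\gl(k)} \cong \bigoplus_{\nu}\mV(\nu),
\end{equation*}
the sum ranging over tuples $\nu$ with $\mu_{i-1}\ge \nu_i \ge \mu_i$. Each summand contributes precisely one $\gl(k)$-HWV of $\gl(k)$-weight $\nu$, so $\dim\mV(\mu)^+_\nu \in \{0,1\}$, with equality if and only if the betweenness condition holds. Because the $v_\mu(\nu)$ for different $\nu$ live in distinct $\gl(k)$-weight spaces of $\mV(\mu)^+$, they are automatically linearly independent; the spanning claim therefore reduces to (iii).

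The core of the argument is the non-vanishing of $v_\mu(\nu)$, which I would handle by a PBW leading-term analysis. Filter $\mcU(\gl(k+1))$ by total degree, and within each degree use a lexicographic order on monomials in which the generators $E_{i0}$ are rightmost. For each $z_{i0}$, the $s=0$ summand contributes $E_{i0}$ multiplied by the Cartan polynomial $(h_i-h_{j_1})\cdots(h_i-h_{j_r})$; every other summand contains some $E_{i_s 0}$ with $i_s<i$ paired with off-diagonal operators $E_{i i_1}\cdots E_{i_{s-1}i_s}$ that sit strictly higher in the lexicographic order, hence reduce to lower-order terms once commuted to the right. An induction on $|d|=d_1+\cdots+d_k$ then shows that the Cartan evaluations remain non-zero on the intermediate weights produced along the iteration, provided $(d_1,\ldots,d_k)$ satisfies the betweenness constraints of Theorem 1. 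Thus the top-order contribution of $z_{10}^{d_1}\cdots z_{k0}^{d_k}v_\mu$ is a non-zero scalar multiple of $E_{10}^{d_1}\cdots E_{k0}^{d_k}v_\mu$, which by the PBW description of $\mV(\mu)$ is itself non-zero in the admissible range. The main obstacle is precisely making the leading-term bookkeeping and the non-vanishing of the scalar factors rigorous; once this is settled, combining (i)--(iii) delivers the basis statement.
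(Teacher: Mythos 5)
The paper never actually proves this lemma: it is imported from Molev's theory of Mickelsson algebras and Gelfand--Tsetlin bases (the same source cited for Lemma \ref{lemma4}), so there is no in-text argument to compare against. Judged on its own merits, your steps (i) and (ii) are correct and complete: iterating Lemma \ref{lemma4} places $v_\mu(\nu)$ in $\mV(\mu)^+_\nu$ with $\nu_j=\mu_j+d_j$, and the multiplicity-free branching $\gl(k+1)\downarrow\gl(k)$ gives $\dim\mV(\mu)^+_\nu\le 1$ with equality exactly under betweenness, so both linear independence and spanning reduce to the non-vanishing of the $v_\mu(\nu)$.

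Step (iii), which you yourself flag as the genuine obstacle, is not actually carried out, and the one concrete assertion you lean on is wrong as stated: in a \emph{finite-dimensional} irreducible module, PBW only gives that monomials in negative root vectors applied to $v_\mu$ \emph{span} $\mV(\mu)$; it does not give non-vanishing of an individual monomial $E_{10}^{d_1}\cdots E_{k0}^{d_k}v_\mu$. Such monomials vanish in abundance outside the admissible range, and even inside it their non-vanishing is a claim essentially as hard as the lemma itself (only for the simple root vector $E_{10}$ does $\spl(2)$-theory give it for free). Likewise, ``the Cartan evaluations remain non-zero on the intermediate weights'' is precisely the quantitative heart of the matter and is left unverified. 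The standard Molev--Zhelobenko route closes the gap differently and more cleanly: one computes the scalar by which $z_{0i}z_{i0}$ acts on $\mV(\mu)^+_\nu$ --- an explicit ratio of products of linear factors in the shifted entries $\mu_a-a$ and $\nu_a-a$ --- checks that the betweenness conditions (made strict by the shifts $-a$) force every factor to be non-zero, and concludes $z_{i0}\eta\neq 0$ whenever $\eta\neq 0$ and $\nu+\varepsilon_i$ is still admissible; induction on $d_1+\cdots+d_k$ then finishes. If you want to keep your leading-term strategy, you must both (a) verify the non-vanishing of the Cartan factors $(h_i-h_j)$ on each intermediate weight and (b) supply an independent argument that $E_{10}^{d_1}\cdots E_{k0}^{d_k}v_\mu\neq 0$ in the admissible range; neither is automatic, and (b) in particular cannot be waved through with PBW.
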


\begin{example} 
As before, take $k=2$ and $\lambda = (4,3,1)^*$, and consider the module $\mV(4,3,1)^*$ generated by the space $\mcH_{4,3,1}$. Lemma \ref{basis} then states that consecutive actions of the operators $z_{10}$ and $z_{20}$ will produce a basis of the space $\mV(4,3,1)^* \cap \ker \langle u_1, \partial_2 \rangle$. More precisely, we obtain the following spaces, corresponding to the $6$ possible choices for $\mu$, and the respective spaces of higher spin solutions to which they contribute:
\begin{equation*}
		\begin{array}{l|l}
			\mcH_{4,3,1} & \ker_{4}^A \mcD_{3,1}\\
			z_{10}\mcH_{4,3,1} & \ker_{3}^A \mcD_{4,1}\\
			z_{20}\mcH_{4,3,1} & \ker_{3}^A \mcD_{3,2}  \\
			z_{10}z_{20}\mcH_{4,3,1} & \ker_{2}^A \mcD_{4,2} \\
			z_{20}^2\mcH_{4,3,1} & \ker_{2}^A \mcD_{3,3}  \\
			z_{10}z_{20}^2\mcH_{4,3,1} & \ker_{1}^A \mcD_{4,3}. \\
		\end{array}
\end{equation*}
Note however that this is {\em not} the decomposition of $\ker_{4}^A \mcD_{3,1}$. Indeed, using the correct embedding factors, we get that the latter is equal to 
\begin{equation*}
\ker_{4}^A \mcD_{3,1} = \mcH_{4,3,1} \oplus z_{10}\mcH_{5,2,1} \oplus z_{10}^2\mcH_{6,1,1} \oplus z_{20}\mcH_{5,3,0} \oplus z_{10}z_{20}\mcH_{6,2,0} \oplus z_{10}^2z_{20}\mcH_{7,1,0}.
\end{equation*} 
\end{example}

Recall that the embedding factor, as a whole, should behave as $E_{10}^a E_{20}^b$, with this term itself as a leading term. This might not be so obvious from the definitions and lemmas stated above. Note though that the operators $z_{i0}$ actually are defined up to a constant factor. We can also use the corresponding generators of the transvector algebra 
\begin{equation*}
Z(\gl(k+1), \gl(k)),
\end{equation*}
by using the field of fractions $R(\mathfrak{h})$. Hence, it is possible to divide $z_{i0}$ by the factor $(h_i-h_{i-1})\ldots(h_i-h_1)$, whence the resulting operators $s_{i0}$ (and likewise $s_{0i}$) take the form 
\begin{align*}
	s_{i0} &= \sum_{i>i_1>\cdots>i_s\geq 1}E_{ii_1}E_{i_1i_2}\cdots E_{i_{s-1}i_s}E_{i_s0}\frac{1}{(h_i-h_{i_1})\cdots (h_i-h_{i_s})} = p_{\gl(k)} E_{i0}\\
	s_{0i} &= \sum_{i<i_1<\cdots<i_s\leq k}E_{i_1 i}E_{i_2 i_1}\cdots E_{i_s i_{s-1}}E_{0 i_s}\frac{1}{(h_i-h_{i_1})\cdots (h_i-h_{i_s})} = p_{\gl(k)} E_{0i}
\end{align*}
or still $s_{i0} = E_{i0} + \mcO_{i0}$, which confirms \eqref{leadingterm}. It is now easily seen that powers of the 
operators $s_{i0}$ or $s_{0i}$ indeed behave as the leading terms predicted earlier. For instance, after rescaling, the four operators in (\ref{raisinglowering}) 
become
\begin{equation*} 
	\left\{
		\begin{array}{lcl}
			s_{10} &=& \langle u_1, \partial_x \rangle \\
			s_{20} &=& \displaystyle \langle u_2, \partial_x \rangle + \langle u_2, \partial_1 \rangle\langle u_1, \partial_x \rangle\frac{1}{\mE_2 - \mE_1 -1}\\
			s_{01} &=& \displaystyle \langle x, \partial_1 \rangle + \langle u_2, \partial_1 \rangle\langle x, \partial_2 \rangle\frac{1}{\mE_1 - \mE_2 +1}\\
			s_{02} &=& \langle x, \partial_2 \rangle.
		\end{array}
	\right.
\end{equation*}
So, the embedding factors defined in Proposition \ref{decomp_Ms}, are given by 
\begin{equation*}
\rho_{d_1,d_2, \cdots, d_k} = s_{10}^{d_1} \cdots s_{k0}^{d_k},
\end{equation*}
in accordance with Lemma \ref{basis}.


\appendix
\section{Action of the conformal group}
\label{Appendix_inversion}
The standard definition of an invariant operator $\mcD$ on a $G$-homogeneous bundle $E_V$ over $M=\quotient{G}{P}$ is that 
the operator $\mcD$ commutes with the induced action of $G$ on the space $\Gamma(E_V)$ of sections of $E_V$,
 see e.g. \cite[section 1.4]{Cap2} for more details. 
In this section, we will use some results from \cite{BSSVL1} to explain how to obtain the harmonic inversion $\mcJ_R$ from equation (\ref{Harmonic_inversion}). 
Therefore we will use the homogeneous space $\quotient{G}{P}$ where $G$ is the conformal group $\Pin(1,m+1)$. The group $G$ 
is the double cover of the orthogonal group $\operatorname{O}(1,m+1)$. There is a nice way to describe this group using the so-called Vahlen matrices, see 
e.g. \cite{Ahlfors, Vahlen}. In order to introduce them, we first have to introduce some basic definitions on Clifford algebras. 
The (real) universal Clifford algebra $\mR_m$ is the algebra generated by an orthonormal basis $\{e_1,\ldots,e_m\}$ for the vector space $\mR^m$ endowed with the Euclidean inner product $\inner{u,v} = \sum_j u_jv_j$ using the multiplication rules 
\begin{equation*}
e_ae_b + e_be_a = -2\langle e_a,e_b\rangle = -2\delta_{ab} 
\end{equation*}
with $1 \leq a, b \leq m$. The Clifford group is a subgroup of $\mR_m$ defined by means of
\begin{equation*}
\Gamma(m)=\set{\prod_{j=1}^k x_j : x_j \in \mR^m\setminus\set{0}, k\in \mN}.
\end{equation*}
In the following theorem, we also need the main involution (reversion) on $\mR_m$ which is defined on basis vectors as 
$e^{\ast}_{i_1 \ldots i_k}:=e_{i_k \ldots i_1}=(-1)^{\frac{k(k-1)}{2}}e_{i_1 \ldots i_k}$. Here the notation $e_{i_1 \ldots i_k}$ is used for the 
product $e_1e_2\ldots e_k$. We then have:
\begin{theorem}
Suppose $g\in \Pin(1,m+1)$ is a conformal transformation, then there exists a $2\times 2$-matrix
\begin{equation*}
A_g:=\begin{pmatrix}
a & b \\
c & d \\
\end{pmatrix}
\end{equation*}
such that: 
\begin{itemize}
\item[(i)] $a,b,c,d \in \Gamma(m)\cup \set{0}$.
\item[(ii)] $ab^{\ast},cd^{\ast},c^{\ast}a,d^{\ast}b\in \mR^m$.
\item[(iii)] $ad^{\ast}-bc^{\ast}=\pm 1$.
\end{itemize}
\end{theorem}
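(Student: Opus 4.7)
The plan is to exploit the fact that the conformal group $\Pin(1,m+1)$ is generated by a small and well-understood set of elementary transformations, and to check the three Vahlen conditions one generator at a time before arguing that the conditions are preserved under composition.

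First I would recall the Liouville-style generation result for the conformal group in dimension $m>2$: every element of $\Pin(1,m+1)$ can be written as a finite product of translations $T_v:x\mapsto x+v$ with $v\in\mR^m$, Clifford reflections/rotations $R_s:x\mapsto sxs^{-1}$ with $s\in\Gamma(m)$, dilations $D_\lambda:x\mapsto \lambda^2 x$ with $\lambda\in\mR^\times$, and the inversion $J:x\mapsto x^{-1} = -x/\nx^2$. For each of these I would write down an explicit candidate Vahlen matrix, namely
\begin{equation*}
A_{T_v}=\begin{pmatrix} 1 & v \\ 0 & 1 \end{pmatrix},\quad
A_{R_s}=\begin{pmatrix} s & 0 \\ 0 & (s^{*})^{-1} \end{pmatrix},\quad
A_{D_\lambda}=\begin{pmatrix} \lambda & 0 \\ 0 & \lambda^{-1} \end{pmatrix},\quad
A_J=\begin{pmatrix} 0 & 1 \\ -1 & 0 \end{pmatrix},
\end{equation*}
and verify by direct computation that each of them satisfies (i), (ii) and (iii): entries lie in $\Gamma(m)\cup\{0\}$; the four mixed products $ab^*, cd^*, c^*a, d^*b$ are vectors in $\mR^m$ (trivially so, because three of them vanish in each case and the remaining one is either $v$, $0$, or $\pm 1\in\mR\subset\mR^m$); and $ad^*-bc^*=\pm 1$ is immediate. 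In parallel I would check, by computing $(ax+b)(cx+d)^{-1}$, that the action induced by each matrix reproduces the original conformal map.

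Next I would prove that the set of matrices satisfying (i)--(iii) is closed under matrix multiplication. Given two such matrices $A$ and $A'$, write out $A\cdot A'$ and check each condition. Condition (i) requires that each entry of the product, such as $aa'+bc'$, lies in $\Gamma(m)\cup\{0\}$; here the key algebraic identity is that a sum $x+y$ of two Clifford group elements is again in $\Gamma(m)\cup\{0\}$ precisely when $xy^*$ is a vector in $\mR^m$ (or zero), so one reduces to showing that the relevant cross-products land in $\mR^m$. This is where conditions (ii) of the factors get used, via expansions like $(aa'+bc')(ac'^*+bd'^*)$ that rearrange, using $a'c'^*,b'd'^*\in\mR^m$ and the anti-involution $(xy)^*=y^*x^*$, into manifestly vectorial combinations. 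Condition (iii) for the product reduces, after a short calculation using $a'd'^*-b'c'^*=\pm 1$ and the vector identities just established, to $\pm(ad^*-bc^*)=\pm 1$.

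The hard part will be this multiplicativity step, particularly verifying conditions (i) and (ii) for a general product: the non-commutativity of $\mR_m$ means one cannot blindly mimic the commutative $\operatorname{SL}(2,\mR)$ calculation, and the identity that a sum of two paravectors/Clifford group elements lies again in $\Gamma(m)\cup\{0\}$ is subtle and relies on the vectorial cross-product conditions (ii). Once closure is established, the theorem follows: one exhibits $A_g$ for each generator, and then $A_{g_1\cdots g_N}:=A_{g_1}\cdots A_{g_N}$ furnishes the desired matrix, with the ambiguity of the overall sign in (iii) accounting for the double cover $\Pin(1,m+1)\to\operatorname{O}(1,m+1)$.
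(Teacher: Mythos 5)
The paper itself does not prove this theorem: it is quoted as a classical fact with pointers to Ahlfors and Vahlen, so there is no in-paper argument to compare against. Your outline --- exhibit Vahlen matrices for the generators supplied by Liouville's theorem (translations, Clifford rotations, dilations, inversion) and then show that conditions (i)--(iii) are stable under matrix multiplication --- is exactly the classical route of the cited Ahlfors reference, and the first half (the elementary matrices, the verification of (i)--(iii) for each, and the check that they induce the right M\"obius maps) is fine up to sign conventions.

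The gap is in the closure step, and specifically in the ``key algebraic identity'' you propose to drive it: the claim that for $x,y\in\Gamma(m)$ one has $x+y\in\Gamma(m)\cup\set{0}$ precisely when $xy^{\ast}\in\mR^m\cup\set{0}$ is false in both directions. Take $x=1$ and $y=e_1$ (note $1=e_1(-e_1)\in\Gamma(m)$): then $xy^{\ast}=e_1\in\mR^m$, yet $1+e_1\notin\Gamma(m)$, since any product of nonzero vectors is parity-homogeneous in $\mR_m$ while $1+e_1$ has nonzero components in both the even and the odd part. Conversely $x=e_1$, $y=e_2$ gives $x+y=e_1+e_2\in\mR^m\setminus\set{0}\subset\Gamma(m)$ while $xy^{\ast}=e_1e_2$ is a bivector. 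So the lemma you lean on cannot be the engine of the multiplicativity argument, and the entrywise verification of (i) for a product would not go through as described. The correct ingredients are of a different shape: Ahlfors' lemmas assert that for a row $(c,d)$ satisfying the relevant part of (ii) the \emph{affine} combinations $cx+d$ with $x\in\mR^m$ stay in $\Gamma(m)\cup\set{0}$ (essentially because $cx+d=c(x+c^{-1}d)$ with $c^{-1}d$ a vector), and the closure of the Vahlen set under multiplication is then obtained not by brute force but by showing that every matrix satisfying (i)--(iii) factors into elementary ones (a Gaussian-elimination argument split over the cases $c=0$ and $c\neq 0$), which yields the group property and the surjectivity onto the M\"obius group simultaneously. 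Alternatively, and more cleanly, one can invoke the algebra isomorphism between $2\times 2$ matrices over $\mR_m$ and the Clifford algebra of $\mR^{1,m+1}$, under which the matrices satisfying (i)--(iii) are identified with the Clifford group of that quadratic space; closure is then automatic and the double cover of $\operatorname{O}(1,m+1)$ appears for free. Either repair is standard, but as written your central lemma is wrong and the hard step of your proof does not close.
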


All conformal transformations can be written in the form $T(x)=(ax+b)(cx+d)^{-1}$, where $x\in \mR^m$. Such maps are well-defined 
on the conformal compactification of the Euclidean space $\mR^m$, which is the $m$-dimensional sphere $S^m$. The group 
$G$ acts transitively on $S^m$ and the isotropy group $P$ of the point $0\in \mR^m$ is the subgroup of $G$ where the 
associated Vahlen matrices have the property that $b=0$. The group $P$ is a parabolic subgroup of $G$. We now have a specific 
realisation of $S^m=\quotient{G}{P}$, which turns the $m$-dimensional sphere into a homogeneous space. 
\par
In what follows, we shall only consider vector bundles associated to irreducible $P$-representations. Due to condition $(iii)$, elements of $P$ have 
the property that $a^{-1}=\pm d^{\ast}$. It is known that the group $P$ is a semi-direct product of $G_0=\Pin(m)\times \mR^+$ and a 
commutative normal subgroup. This means that its irreducible representations are tensor products of irreducible representations of 
$\Pin(m)$ with a one-dimensional representation of $\mR^+$. They are classified by a highest weight $\lambda$ for $\Pin(m)$ and 
by a complex number $\nu\in \mC$, called the conformal weight. 
The normal subgroup of $P$ acts trivial on irreducible representations due the Engel's theorem. 
For $p\in P$, the entry $a\in \Gamma(m)$ of the associated Vahlen matrix 
\begin{equation*}
A_p:=\begin{pmatrix}
a & 0 \\
c & d \\
\end{pmatrix}
\end{equation*}
has nonzero spinor-norm, i.e. $\mcN(a)^2:=a\bar{a}\neq 0$. Here $\bar{a}$ is the Clifford conjugation, defined on basis vectors 
as $\bar{e}_{i_1 \ldots i_k}:=(-1)^ke_{i_k \ldots i_1}$.
This means that $a\in \Gamma(m)$ can be written as the product of $\frac{a}{\mcN(a)}\in \Pin(m)$ and $\mcN(a)\in \mR^+$. 
If $\rho_{\lambda}:\Pin(m)\longrightarrow \Aut\brac{\mV_{\lambda}}$ is an irreducible $\Pin(m)$-representation with highest 
weight $\lambda$ and $\nu \in \mC$, then we denote by $\rho_{\lambda}^{\nu}$ the irreducible representation of $P$ on 
$\mV_{\lambda}$, which is given by
\begin{equation*}
\rho_{\lambda}^{\nu}(p)\comm{v}:=\mcN(a)^{2\nu}\rho_{\lambda}\brac{\frac{a}{\mcN(a)}}\comm{v},
\end{equation*}
where $v\in \mV_{\lambda}$, $p\in P$ and $A_p=\begin{pmatrix} a & 0 \\ c & d \\ \end{pmatrix}$.
\par
In this paper, we only discuss differential operators acting on sections of 
homogeneous bundles over the open subset $\mR^m\subset S^m$, which can be considered embedded into the sphere 
$S^m$ by the map 
\begin{equation*}
i:\mR^m\longrightarrow S^m: x \mapsto \begin{pmatrix}1 & x \\ 0 & 1 \\ \end{pmatrix}P.
\end{equation*}
Such an embedding makes it possible to identify the space of sections $\Gamma(E_V)$ with the space of smooth functions 
$\mcC^{\infty}\brac{\mR^m,\mV_{\lambda}}$, i.e. for every section $s\in \Gamma(E_V)$ defined on the open subset 
$\mR^m\subset S^m$, we define the associated smooth function as follows:
\begin{equation*}
f:\mR^m \longrightarrow \mV_{\lambda}: x \mapsto f(x):=s(i(x)).
\end{equation*}
The induced action of $G$ on the space of smooth $\mV_{\lambda}$-valued functions is then given by (see e.g. \cite{BSSVL1})
\begin{equation}\label{G_action_functions}
g\cdot f(x):=\mcN(cx+d)^{2\nu}\rho_{\lambda}\brac{\frac{(cx+d)^{\ast}}{\mcN(cx+d)}}\Bigg[ f\brac{\frac{ax+b}{cx+d}}\Bigg],
\end{equation}
where $g^{-1}=\begin{pmatrix} a & b \\ c & d \\ \end{pmatrix}\in G$. Note that the absence of a minus sign in the exponent of $\mcN(cx+d)$ is due to a difference 
in conventions for the conformal weight.
\par
The Kelvin inversion on $\mR^m$ is given by $x\mapsto -\frac{x}{\norm{x}^2}$ which corresponds to the matrix 
\begin{equation*}
A_g:=\begin{pmatrix}
0 & 1 \\
-1 & 0 \\
\end{pmatrix}.
\end{equation*}
Using equation (\ref{G_action_functions}) and the fact that $\mcN(x)=x\bar{x}=\norm{x}^2$, this implies that 
the action of the Kelvin inversion on functions is given by
\begin{equation*}
A_g\cdot f(x)=\norm{x}^{2w} \rho_{\lambda}\brac{\frac{x}{\norm{x}}}\Bigg[ f\brac{\frac{x}{\norm{x}^2}}\Bigg].
\end{equation*}
If the function $f$ takes values in the space $\mcH_{\lambda}$, then this action becomes
\begin{equation}\label{Kelvin_g_action}
A_g\cdot f(x)=\norm{x}^{2-m}  f\brac{\frac{x}{\norm{x}^2},\frac{xu_1\bar{x}}{\norm{x}^2},\ldots,\frac{xu_k\bar{x}}{\norm{x}^2}}.
\end{equation}
Note that the conformal weight in case of the higher spin Laplace operator is given by $w=1-\frac{m}{2}$, which explains the exponent for $\norm{x}$. 
\begin{remark}
The expression from equation (\ref{Kelvin_g_action}) is different from the one from the one defined in \eqref{Harmonic_inversion}, but the difference is only up to an overall minus sign. 
\end{remark}

\end{document}